\documentclass[11pt]{article}

\usepackage[margin=20mm]{geometry}

\usepackage{amsmath, amsthm, amssymb, authblk, wasysym, verbatim, bbm, color, graphics, geometry, relsize, setspace,graphicx, subcaption,cleveref,comment}
\usepackage{xcolor}
\usepackage[normalem]{ulem}
\newcommand{\R}{\mathbb{R}}

\newcommand{\stkout}[1]{\ifmmode\text{\sout{\ensuremath{#1}}}\else\sout{#1}\fi}

\newtheorem{thm}{Theorem}
\newtheorem{lem}{Lemma}

\newtheorem{fact}{Fact}
\theoremstyle{definition}

\newtheorem{rem}{Remark}
\numberwithin{equation}{section}

\title{The convexity of optimal transport-based waveform inversion for certain structured velocity models}
\author{Srinath Mahankali\footnote{smahankali10@stuy.edu}\\%
    Stuyvesant High School\\%
    New York, NY 10282
}
\date{
    Faculty Advisor: Dr.\ Yunan Yang\footnote{yunan.yang@nyu.edu}\\%
    Courant Institute of Mathematical Sciences\\%
    New York, NY 10012-1185
}
\begin{document}
\maketitle
\begin{abstract}
Full--waveform inversion (FWI) is a method used to determine properties of the Earth from information on the surface. We use the squared Wasserstein distance (squared $W_2$ distance) as an objective function to invert for the velocity of seismic waves as a function of position in the Earth, and we discuss its convexity with respect to the velocity parameter. 
In one dimension, we consider constant, piecewise increasing, and linearly increasing velocity models as a function of position, and we show the convexity of the squared $W_2$ distance with respect to the velocity parameter on the interval from zero to the true value of the velocity parameter when the source function is a probability measure. Furthermore, we consider a two--dimensional model where velocity is linearly increasing as a function of depth and prove the convexity of the squared $W_2$ distance in the velocity parameter on large regions containing the true value. We discuss the convexity of the squared $W_2$ distance compared with the convexity of the squared $L^2$ norm, and we discuss the relationship between frequency and convexity of these respective distances. We also discuss multiple approaches to optimal transport for non--probability measures by first converting the wave data into probability measures.
\end{abstract}
\section{Introduction}

The study of seismic waves has many practical applications in geology, especially in searching for natural resources such as oil or natural gas.
It plays a major role in determining the type of material underground, given the position of several receivers on the surface and the amount of time it takes for the wave to rebound to the surface. The velocity of the wave (as a function of its position) is unknown, and finding the wave velocity function is equivalent to finding the underground substance. With the same wave source, different wave velocity properties produce different wave data (such as wave amplitude and travel time) measured at a given receiver. This wave data can be used to find the velocity. We use an objective function, or misfit function, which measures the ``distance" between two sets of wave data. This allows us to compare the observed data with simulated data to find the true velocity function. Some examples of objective functions are the $L^p$ distance and the $p$th Wasserstein distance ($W_p$ distance) from the theory of optimal transport~\cite{villani2003topics}, the latter of which is the main tool of this research project.

The objective function becomes zero when the observed data and simulated data are equivalent, which occurs when we have the correct velocity model. Thus, finding the correct velocity model is an optimization problem: minimizing the objective function, which measures the error between simulated and observed wave data. Although it is necessary to have a gradient of zero to minimize the objective function, this is not enough, as it is possible to reach a saddle point or a local minimum at such a point. However, this issue is fixed if the objective function is convex, as it will have only one global minimum. Thus, it is important for the convex region near the global minimum to be as large as possible. For this reason, we investigate the convexity of the squared $W_2$ distance as an objective function.

The conventional choice of objective function is the least--squares $(L^2)$ norm, used in both time~\cite{tarantola2005inverse} and frequency~\cite{pratt1990inverse1, pratt1990inverse2} domains.
If we have data from multiple receiver locations $X_r$ (where $r$ is the index for the receiver location), we can consider the squared $L^2$ norm of the difference between the predicted wave data $g(X_r,t,c)$ and the observed wave data $h(X_r,t) = g(X_r,t,c^*)$ \begin{equation}\label{eq:fullL2norm}
    \mathcal{L}(c) = \frac{1}{2}\sum_r\int_0^\mathcal{T} |g(X_r,t,c)-h(X_r,t)|^2 \hspace{1mm}\text{d}t,
\end{equation} from~\cite{baek2014velocity}, where $c^*$ is the true velocity parameter.
While $\mathcal{L}(c)$ is minimized (and therefore equal to zero) exactly when $c = c^*,$ algorithms for minimizing the squared $L^2$ norm may reach local minima instead of the global minimum when $c = c^*,$ due to the nonconvexity of the squared $L^2$ norm as discussed in~\cite{baek2014velocity,yang2018optimal}. In addition, its sensitivity to noise can make it an unsuitable choice of objective function~\cite{bozdaug2011misfit}. Thus, we use the squared $W_2$ distance from~\cite{chen2018quadratic,yang2018optimal,yang2019analysis} instead. If there are multiple receiver locations $X_r,$ our final objective function will be \begin{equation}\label{eq:fullW2norm}
    \mathcal{W}(c) = \sum_{r} W_2^2(g(X_r,t,c),h(X_r,t)).
\end{equation}
Previous results show that the squared $W_2$ metric is jointly convex in translations and dilations of the data~\cite{yang2019analysis}, suggesting that the squared $W_2$ metric is a suitable choice for the objective function. Taking the function $$f(x) =\begin{cases} 
      \frac{1}{2\pi}\sin^2(x), & -2
      \pi\le x \le 2\pi, \\
      0, & \mathrm{otherwise.}
   \end{cases} $$ as an example, 
we compare in \Cref{fig:Comparison of W2 and L2 Distance} the graphs of the squared $L^2$ norm of $f(x) - f(x-s)$ with the squared $W_2$ distance between $f(x)$ and $f(x-s)$. It can be observed that the squared $W_2$ distance is convex in the shift $s$ while the graph of the squared $L^2$ norm is not.
\begin{figure}
\begin{subfigure}{.5\textwidth}
  \centering
  \includegraphics[width=.8\linewidth]{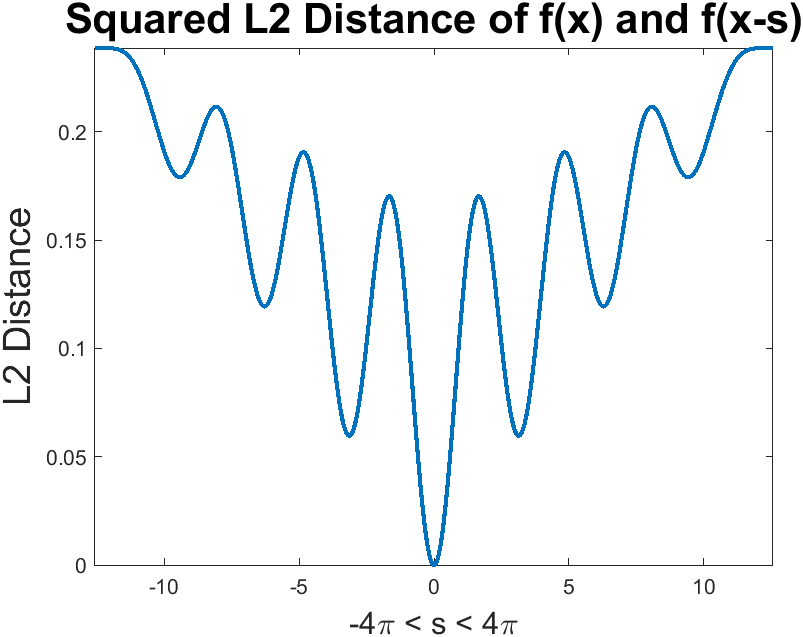}
  \caption{Squared $L^2$ metric as a plot in $s$}
  \label{fig:l2sinplot}
\end{subfigure}%
\begin{subfigure}{.5\textwidth}
  \centering
  \includegraphics[width=.8\linewidth]{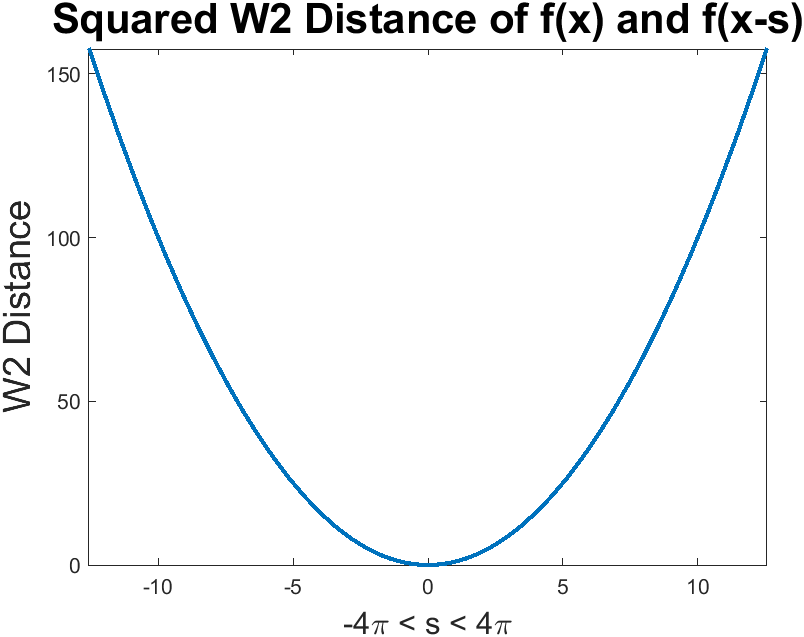}
  \caption{Squared $W_2$ distance as a plot in $s$}
  \label{fig:w2sinplot}
\end{subfigure}%
    \caption{Comparison between squared $W_2$ distance of $f(x)$ and $f(x-s)$ and squared $L^2$ norm of $f(x) - f(x-s)$ as plots in the shift $s$.}
    \label{fig:Comparison of W2 and L2 Distance}
\end{figure}

In this paper, we present a theoretical approach to velocity inversion using optimal transport by investigating the convexity of the squared $W_2$ distance as a function of the velocity parameters -- this has not been studied theoretically before. We investigate several velocity models in one dimension and show that the squared $W_2$ distance is a suitable objective function when inverting for the velocity parameter. In two dimensions, we consider a particular velocity model, and we show that the squared $W_2$ distance is a suitable objective function in the case where the source function $f$ is nonnegative. We generalize to when the source function alternates between negative and positive values, and we show that the squared $W_2$ distance is a suitable objective function given certain requirements on $f$. These theorems suggest that the squared $W_2$ distance is a suitable objective function when $\inf f$ is close enough to zero. Numerical evidence suggests that the squared $W_2$ distance is a better objective function than the squared $L^2$ norm when the source function is nonnegative.

The paper is structured as follows. In Section 2, we discuss background knowledge which is used later on in this paper. First, we introduce optimal transport and the $W_p$ distance, along with an explicit formula as well as some of its properties. Furthermore, we discuss the solution of wave equations in one dimension and introduce \textit{ray tracing}, which is used to solve wave equations in higher dimensions.
In Section 3, we first consider one--dimensional velocity models, where the velocity is either constant, piecewise constant, or linearly increasing as a function of position, and the source function is a probability measure. 
Then, we study a two--dimensional velocity model where the wave velocity $v$ satisfies $v(X,z) = a+bz$ where $a$ and $b$ are positive constants, $X$ is the horizontal position, and $z$ is the depth of the wave. Initially, in \Cref{sssection:constant_amplitude}, we assume that the source function $f$ is a probability measure and that the wave amplitude is unchanged, and we show \Cref{thm:2d_time_shift}.
In \Cref{sssection: vary_amplitude}, we involve the wave amplitude and we allow the source function to alternate between positive and negative values, and we show \Cref{thm:W2_amp_convex}.
In Section 4, we compare the convexity of the squared $W_2$ distance with the squared $L^2$ norm, and we include numerical examples. We also discuss the relationship of the $W_2$ distance with the $\dot{\mathcal{H}}^{-1}$ norm, and we discuss multiple approaches to optimal transport for non--probability measures.
We summarize this paper in Section 5 and discuss a possible direction for future research. 

\section{Background}
We introduce optimal transport, and essential background on wave equations. 
\subsection{Optimal Transport}
In this section, we establish the main goal of optimal transport (originally introduced by Monge~\cite{monge1781memoire}) and introduce the $W_p$ distance along with some of its useful properties. Optimal transport involves probability spaces, which are nonnegative measure spaces with total measure equal to one. We discuss the convexity of the squared $W_2$ metric, which is a distance between two probability measures on a probability space, and introduce some of its properties.
\subsubsection{General Problem}
\begin{figure}
    \centering
    \includegraphics[width=.5\linewidth]{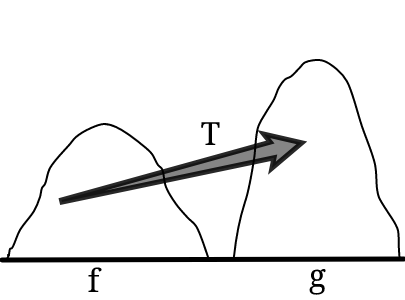}    
    \caption{The two functions $f$ and $g$ have the same total area, and $T$ maps all points of $f$ to $g$. The goal of optimal transport is to find $T$ such that the cost function is minimized.}
    \label{fig:Optimal Transport}
\end{figure}

Consider two distinct probability measures $\mu$ and $\nu$ defined on the Borel sets of $\R^n$.
The goal of optimal transport is to find a map $T:\R^n\to\R^n$, shown in \Cref{fig:Optimal Transport}, which minimizes the total cost of mapping $\mu$ to $\nu$ according to the map $T,$ for a given cost function $c:\R^n\times\R^n\to \R$~\cite{ambrosio2013user}.
The $W_p$ metric, based on optimal transport~\cite{yang2018analysis}, gives the optimal transportation cost when the cost function is $c(x,y) = |x-y|^p$.
\subsubsection{Computing the $W_2$ metric in one dimension} Let $\mu$ and $\nu$ be probability measures defined on the Borel sets of $\R^n$.
We define the $W_p$ distance as 
\begin{equation*}
     W_p(f,g) = \left(\mathop{\inf}_{T\in\mathbb{M}(\mu,\nu)}\int_{\Omega}|x-T(x)|^p\mathrm{d}\mu\right)^{\frac{1}{p}}
\end{equation*}
where $f(x)\text{d}x = \text{d}\mu$, $g(y)\text{d}y = \text{d}\nu$, $\Omega$ is the support of $\mu$, and $\mathbb{M}(\mu,\nu)$ is the set of all mass--preserving maps $T:\R^n\to\R^n$ which map $\mu$ to $\nu$.
The integral within the infimum is the total cost of the transport map $T,$ so computing $W_p(f,g)$ is equivalent to minimizing the transport cost. 
We study the case where $p = 2,$ and our objective function is $W_2^2(f,g)$.

In one dimension, it is possible to express $W_2^2(f,g)$ for probability measures $f$ and $g$ in a simpler way: Let $F$ and $G$ be the cumulative distribution functions of $f$ and $g,$ respectively. 
Then, Rachev and R\"{u}schendorf derive the formula for the squared $W_2$ distance in one dimension~\cite{rachev1998mass} as
\begin{equation}\label{eq:W2_formula}
    W_2^2(f,g) = \int_0^1(F^{-1}(s) - G^{-1}(s))^2 \hspace{1mm}\text{d}s = \int_\Omega (t-G^{-1}(F(t)))^2f(t)\hspace{1mm}\text{d}t
\end{equation} where $\Omega$ is the domain of $f$. 
This formula for $W_2^2(f,g)$ is useful when the wave data is only a function of time.

As wave data are not usually probability density functions, we can normalize a function $k(t)$ defined on $[0,\mathcal{T}]$ by replacing it with $$\frac{k(t) + \gamma}{\int_0^{\mathcal{T}} (k(t) + \gamma)\hspace{1mm}\text{d}t}$$ for some constant $\gamma > 0$ such that $k(t) + \gamma > 0$ for all $t\in[0,\mathcal{T})$~\cite{engquist2018seismic}. 
\subsubsection{Computing the $W_2$ Metric in Higher Dimensions}
In general, there is no explicit formula to compute the $W_2$ metric in higher dimensions. However, certain requirements derived from the concept of cyclical monotonicity~\cite{knott1984optimal} make it possible to calculate the optimal map $T$, and therefore the $W_2$ metric, through numerical methods. This is shown in the following theorem of Brenier~\cite{brenier1991polar,de2014monge,yang2018optimal}:
\begin{thm}[Brenier's theorem]
Let $\mu$ and $\nu$ be two compactly supported probability measures on $\R^n$. 
If $\mu$ is absolutely continuous with respect to the Lebesgue measure, then there is a convex function $w:\R^n \to \R$ such that the optimal map $T$ for the cost function $c(x,y) = |x-y|^2$ is given by $T(x)=\nabla w(x)$ for $\mu$--almost every $x$.
\end{thm}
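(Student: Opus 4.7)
The plan is to follow the classical route through Kantorovich relaxation and cyclical monotonicity. First I would replace the Monge problem $\inf_T \int |x-T(x)|^2 \, d\mu$ with its Kantorovich relaxation $\inf_\pi \int |x-y|^2 \, d\pi(x,y)$ over couplings $\pi \in \Pi(\mu,\nu)$ with marginals $\mu$ and $\nu$. Existence of a minimizing coupling $\pi^*$ follows because $\Pi(\mu,\nu)$ is tight (both marginals are compactly supported), hence weakly sequentially compact by Prokhorov, while the cost functional is continuous on compactly supported measures since $|x-y|^2$ is continuous and bounded on the relevant product of compacts.

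Next I would establish that $\mathrm{supp}(\pi^*)$ is cyclically monotone with respect to the cost $c(x,y) = |x-y|^2$: if not, there would exist finitely many pairs $(x_i, y_i)_{i=1}^N$ in the support along which a permutation of the $y_i$ strictly decreases the total cost, and by a standard perturbation argument one could then modify $\pi^*$ on a neighborhood of these pairs to produce a coupling with strictly smaller cost, contradicting optimality. Expanding $|x-y|^2$ and dropping the marginal-dependent squares, cyclical monotonicity with respect to $|x-y|^2$ is equivalent to cyclical monotonicity of $\mathrm{supp}(\pi^*)$ viewed as a subset of $\mathbb{R}^n \times \mathbb{R}^n$ in the usual monotone-operator sense for the map $x \mapsto y$.

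Then I would invoke Rockafellar's theorem, which says that every cyclically monotone subset of $\mathbb{R}^n \times \mathbb{R}^n$ is contained in the subdifferential of some lower semicontinuous convex function $w:\mathbb{R}^n \to \mathbb{R} \cup \{+\infty\}$. Concretely, $w$ can be constructed as a supremum of affine functions built from a chain of pairs in $\mathrm{supp}(\pi^*)$. Compact support of the marginals makes $w$ finite on a neighborhood of $\mathrm{supp}(\mu)$.

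Finally I would use the hypothesis that $\mu$ is absolutely continuous with respect to Lebesgue measure. A convex function is Lebesgue-almost-everywhere differentiable (it is locally Lipschitz in the interior of its domain, so Rademacher's theorem applies), hence $w$ is differentiable $\mu$-almost everywhere, and at any differentiability point the subdifferential $\partial w(x)$ reduces to $\{\nabla w(x)\}$. Consequently $\pi^*$ is concentrated on the graph of $T := \nabla w$, which means $\pi^* = (\mathrm{id}, T)_\# \mu$ and $T$ pushes $\mu$ forward to $\nu$. The main obstacle I expect is the Rockafellar step: producing the convex potential $w$ from cyclical monotonicity requires the explicit sup-of-affine-functions construction and a careful verification that it is finite and proper on the relevant region; everything else is either soft (weak compactness) or a direct consequence of convex analysis once $w$ is in hand.
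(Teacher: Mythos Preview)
The paper does not prove Brenier's theorem at all: it is stated as background and attributed to the references \cite{brenier1991polar,de2014monge,yang2018optimal}, with no argument given. So there is no ``paper's own proof'' against which to compare your proposal.

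That said, your outline is the standard route to Brenier's theorem and is essentially correct. A couple of minor remarks: the perturbation argument you sketch for cyclical monotonicity (modifying $\pi^*$ near finitely many pairs) is the delicate step and in most textbook treatments is either replaced by, or supplemented with, the dual formulation and $c$-concavity; and you have not addressed uniqueness of the optimal map, though the statement you are asked to prove does not assert it. None of this is a genuine gap in what you wrote---the Rockafellar construction plus Rademacher is exactly how one shows $\pi^*$ is supported on the graph of a gradient of a convex function once cyclical monotonicity is in hand.
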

Furthermore, if $\mu(\text{d}x) = f(x)\text{d}x, \nu(\text{d}y) = g(y)\text{d}y$, then $T$ is differentiable $\mu$--almost everywhere and
\begin{equation}\label{eq:monge_ampere_1}
    \det(\nabla T(x)) = \frac{f(x)}{g(T(x))},
\end{equation}
from the mass preserving property of $T$.
Replacing $T(x)$ in \Cref{eq:monge_ampere_1} with $\nabla w(x)$ leads to the \textit{Monge--Amp\`{e}re} equation
\begin{equation*}
    \det(D^2w(x)) = \frac{f(x)}{g(\nabla w(x))},
\end{equation*}
where $D^2w(x)$ is the Hessian matrix of $w$. Then, the squared $W_2$ distance satisfies \begin{equation*}
    W_2^2(f,g) = \int_\Omega |x-\nabla w(x)|^2 f(x)\hspace{1mm}\text{d}x,
\end{equation*} where $\Omega$ is the domain of $f$.
\subsubsection{Properties of the Squared $W_2$ Distance} Results about the convexity of the squared $W_2$ distance with respect to changes in the data are known~\cite{engquist2016optimal,yang2018optimal,yang2019analysis}.
\begin{thm}\label{thm: shift_dilation_w2}
Let $f$ and $g$ be compactly supported probability density functions on an interval $\Omega\subset\R$. Then,
\begin{equation}\label{eq:W2_shift_formula}
    W_2^2(f(t-s),g(t)) = W_2^2(g(t),f(t)) + s^2 + 2s\int_\Omega (x-T(x))f(x)\hspace{1mm}\mathrm{d}x
\end{equation} where $s\in\R$ and $T$ is the optimal map from $f$ to $g$. Furthermore, $W_2^2(f(t),Af(At-s))$ is convex in both $A$ and $s$, for $A \in \R^+$.
\end{thm}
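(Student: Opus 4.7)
The plan is to establish both parts using the one-dimensional quantile formula \eqref{eq:W2_formula}.

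For the first claim, let $F$ and $G$ denote the CDFs of $f$ and $g$. The CDF of $f(\cdot - s)$ is $F(\cdot - s)$, so its generalized inverse is $F^{-1}(p) + s$. Substituting into \eqref{eq:W2_formula} and expanding the square,
\[
W_2^2(f(t-s), g(t)) = \int_0^1 \bigl(F^{-1}(p) - G^{-1}(p)\bigr)^2 dp + s^2 \int_0^1 dp + 2s \int_0^1 \bigl(F^{-1}(p) - G^{-1}(p)\bigr) dp.
\]
The first term is $W_2^2(g, f)$ and the second is $s^2$. For the cross term, the change of variables $p = F(x)$ together with the monotone-rearrangement identification $G^{-1}(F(x)) = T(x)$ converts it into $2s \int_\Omega (x - T(x)) f(x) dx$, yielding \eqref{eq:W2_shift_formula}.

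For the convexity claim, I would first reduce to a purely quantile-level calculation. The decomposition $A f(A t - s) = \tilde{f}(t - s/A)$ with $\tilde{f}(t) := A f(A t)$ exhibits the shifted-and-dilated density as a translate of the pure dilation. Applying the translation identity just proved to $\tilde{f}$ with shift $\sigma = s/A$, together with the 1D monotone-rearrangement formula $T_A(x) = Ax$ for the $\tilde{f} \to f$ optimal map, yields the closed-form representation
\[
W_2^2\bigl(f, A f(A \cdot - s)\bigr) = \int_0^1 \Bigl(F^{-1}(p) - \tfrac{F^{-1}(p) + s}{A}\Bigr)^2 dp.
\]
Since nonnegative combinations of convex functions are convex, it suffices to prove that for each fixed quantile value $u = F^{-1}(p)$ the integrand $(A, s) \mapsto \bigl(u - (u + s)/A\bigr)^2$ is jointly convex in $(A, s) \in \R^+ \times \R$.

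The main obstacle is this pointwise convexity check. I would compute the $2 \times 2$ Hessian of $((A - 1)u - s)^2 / A^2$ directly, organizing it as a rank-one contribution coming from differentiating the outer square, plus a $1/A^2$ diagonal correction arising in the $s$-direction. The determinant and trace of the resulting matrix should then reduce, after grouping cross terms into complete squares in $u$ and $s$, to manifestly nonnegative expressions. Once pointwise positive semi-definiteness is in hand, integrating against $dp$ on $[0,1]$ transfers joint convexity to the full map $(A, s) \mapsto W_2^2\bigl(f, Af(A\cdot - s)\bigr)$, completing the proof.
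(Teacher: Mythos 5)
The paper itself does not prove this theorem---it is imported from the literature \cite{engquist2016optimal,yang2018optimal,yang2019analysis}---so your proposal has to stand on its own. The first half does: the identity $F^{-1}(p)+s$ for the quantile function of $f(\cdot-s)$, the expansion of the square in \eqref{eq:W2_formula}, and the change of variables $p=F(x)$ with $T=G^{-1}\circ F$ give exactly \eqref{eq:W2_shift_formula}. Your reduction of the second half is also correct up to and including the closed form
\[
W_2^2\bigl(f,Af(A\cdot-s)\bigr)=\int_0^1\Bigl(F^{-1}(p)-\tfrac{F^{-1}(p)+s}{A}\Bigr)^2\,\mathrm{d}p,
\]
and this already yields convexity in $s$ for fixed $A$, since the integrand is a quadratic in $s$ with leading coefficient $1/A^2>0$.

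The step that fails is the pointwise convexity check in the $A$--direction. For fixed $u=F^{-1}(p)$ and $s=0$ the integrand is $u^2(1-1/A)^2$, whose second derivative in $A$ equals $\tfrac{2u^2}{A^4}(3-2A)$, which is negative for every $A>3/2$; the Hessian you propose to compute is therefore not positive semidefinite, and no regrouping into complete squares will make it so. The defect even survives integration: for $f=\tfrac12\mathbf{1}_{[-1,1]}$ and $s=0$ one gets $W_2^2=(1-1/A)^2/3$, which takes the value $4/27$ at $A=3$ while the chord over $[2,4]$ has midpoint value $(1/12+3/16)/2=13/96<4/27$. So convexity in $A$ on all of $\R^+$ is false in this parametrization, and the statement cannot be proved as written; the classical dilation result is convexity in the reciprocal parameter, i.e.\ in the variables $(\lambda,\mu)=(1/A,\,s/A)$ the optimal map is $T(x)=\lambda x+\mu$ and $W_2^2=\int_\Omega(x-\lambda x-\mu)^2f(x)\,\mathrm{d}x$ is a positive semidefinite quadratic form, hence jointly convex in $(\lambda,\mu)$. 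To salvage the $A$--claim you must either pass to that parametrization or restrict $A$ to a neighborhood of $1$ where the second derivative above is nonnegative. (This does not affect the rest of the paper, which invokes the theorem only through the shift formula \eqref{eq:W2_shift_formula}.)
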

The convexity with respect to shifts and dilations suggests that the squared $W_2$ distance is more suitable when inverting for wave data.
\subsection{Background on Wave Equation}
We introduce the partial differential equation which governs the behavior of $n$--dimensional waves. We also introduce d'Alembert's solution to the one--dimensional wave equation and present the \textit{ray tracing} approach to solving higher dimensional wave equations.

\subsubsection{General Wave Equation}
 An $n$--dimensional wave can be expressed as a function $\psi$ of $n$ position variables $x_1, x_2, \ldots, x_n$ and time $t$, which in general satisfies the partial differential equation \begin{equation*}
    \frac{\partial^2\psi}{\partial t^2} - \mathcal{C}(\mathbf{x})^2\left(\frac{\partial^2\psi}{\partial {x_1}^2} + \frac{\partial^2\psi}{\partial {x_2}^2} + \cdots + \frac{\partial^2\psi}{\partial {x_n}^2}\right) = f(\mathbf{x},t)
\end{equation*}
for a variable coefficient $\mathcal{C}(\mathbf{x})$ which is a function defined on $\R^n$, and a source function $f$ which is a function of both space and time. In general, the wave equation might not have an analytical solution.
\subsubsection{Solution to 1D Wave Equation}
In one dimension, we consider a simple case where $\mathcal{C}(\mathbf{x})$ is equal to a constant $c$. The partial differential equation becomes \begin{equation*}
    \frac{\partial^2\psi}{\partial t^2} = c^2\frac{\partial^2\psi}{\partial x^2},
\end{equation*}
but unlike its $n$--dimensional variant, it is possible to obtain an explicit solution as derived by d'Alembert~\cite{demanet2016waves}: 
\begin{equation}
    \psi(x,t;c) = \frac{j(x+ct) + j(x-ct)}{2} + \frac{1}{2c}\int_{x-ct}^{x+ct} k(s) ds \label{eq:d'Alembert_solution}
\end{equation}
given the initial conditions $\psi(x,0) = j(x)$ and $\psi_t(x,0) = k(x)$. 
\subsubsection{Solution for Higher Dimensional Wave Equation}
\begin{figure}
    \centering
    \includegraphics[scale = 0.55]{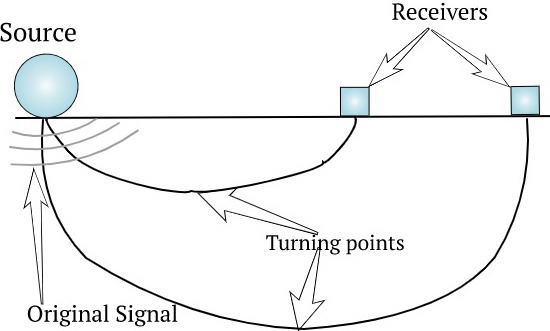}
    \caption{The method of \textit{ray tracing} tracks the path of a ray passing through the Earth, and the paths of two rays are shown here. Each ray leaves the source location, reaches a turning point, and then returns to the surface at the indicated receiver location.}
    \label{fig:raytrace1}
\end{figure}
In higher dimensions, however, a wave equation might not have an analytical solution, so we use \textit{ray tracing} to obtain information about the wave function~\cite{shearer2019introduction}. An example of ray tracing is shown in \Cref{fig:raytrace1}, where an original signal is released, and two receivers on the surface are present to measure the wave data. We assume that the velocity $v$ of a wave can be expressed as a function of depth $z$ in the Earth, and the ray parameter $p,$ or the horizontal slowness, can be expressed as $u(z)\sin\theta$ by Snell's law where $u(z) = \frac{1}{v(z)}$ and $\theta$ is the angle the ray makes with a vertical axis. The ray parameter is constant throughout the path of the ray. We also define the vertical slowness $\eta(z)$ as $\sqrt{u(z)^2 - p^2}$.
The path of the ray is symmetric about a vertical line passing through a turning point at depth $z_p$, and $u(z_p) = p$ while $\eta(z_p) = 0$.
By examining a ray passing through several layers in the Earth and eventually returning to the surface, we may calculate the horizontal distance and the traveltime of the wave as a function of the velocity. We can calculate \begin{equation} \label{distance}
    X = 2p\int_0^{z_p} \frac{dz}{\sqrt{u^2(z) - p^2}} 
\end{equation} as the distance from the source to the receiver, and \begin{equation}\label{traveltime}
    T = 2\int_0^{z_p} \frac{u^2(z)}{\sqrt{u^2(z) - p^2}} dz
\end{equation} as the total travel time. Using \Cref{distance}, it is possible to solve for $p$ in terms of $X$ and the velocity parameters. It is also possible to use \Cref{traveltime} to solve for $T$ as a function of $X$ and the velocity parameters, and it is possible to  calculate the predicted time $T_{pred}$ by guessing the velocity parameter. The method of \textit{traveltime tomography} uses the above formulas to calculate the predicted time and approaches the problem as minimizing the squared difference between the predicted time and the observed time: $(T_{pred} - T_{obs})^2$~\cite{zelt2011traveltime}.

However, this method does not work when the velocity model is not continuous~\cite{shearer2019introduction}. In addition to using \Cref{distance,traveltime}, we make use of the wave's amplitude as well to deal with discontinuous velocity models and other issues that traveltime tomography runs into. This method is known as \textit{full--waveform inversion} (FWI), where both the amplitude and the traveltime are used to approximate the properties of the Earth~\cite{virieux2009overview}.

The final amplitude is asymptotically scaled by a factor of \begin{equation}\label{amplitude}
    A = \left(2\int_0^{z_p}\frac{u(z)}{\sqrt{u^2(z) - p^2}}dz\right)^{-1},
\end{equation} which is the reciprocal of the total arc length of the ray's path~\cite{shearer2019introduction}, and the ray's path is symmetric about the turning point at depth $z_p$. Thus, if the source function is $f(t)$, then the observed wave function is approximated by $Af(t - T_{obs})$. 
To invert for the velocity we use an objective function such as  \Cref{eq:fullL2norm} and \Cref{eq:fullW2norm} to compare observed data with simulated data and minimize it using standard algorithms, and we will see that the squared $W_2$ metric can be a suitable choice. 
\section{Convexity in the Model Parameter}
In this section, we study multiple velocity models in one dimension as well as a model in two dimensions, and we prove convexity of the squared $W_2$ distance with respect to the velocity parameter on certain domains. 
First, we consider one--dimensional velocity models. We begin with a model with constant velocity $c$ and prove convexity of the squared $W_2$ distance with respect to $c$. We then consider two models with piecewise increasing velocities with respect to distance from the source and a model where velocity is linearly increasing. In every one--dimensional velocity model, we assume that the source function is nonnegative. 
Finally, we consider a two--dimensional model where the velocity $v$ satisfies $v(X,z) = a+bz$ where $a$ and $b$ are positive constants, $X$ is the horizontal position of the ray, and $z$ is the current depth of the ray. In \Cref{sssection:constant_amplitude}, we assume that the source function is nonnegative, which allows us to use \Cref{thm: shift_dilation_w2} when computing the squared $W_2$ distance. After this, we consider a more general case where the source function is alternating in \Cref{sssection: vary_amplitude}, and the predicted wave function has an amplitude which is a function of $a,b$, and the receiver location $X$. We also use the following result~\cite{287725}:
\begin{lem}\label{lem:compconvex}
Let $P:\Omega_1\to\R$ and $Q:\Omega_2\to\Omega_1$ be convex functions where $\Omega_1\subseteq\R$ and $\Omega_2\subseteq\R^n$ are convex sets and $n \ge 1$. Furthermore, assume $P$ is nondecreasing. Then, $P(Q(\mathbf{x}))$ is a convex function on $\Omega_2$.
\end{lem}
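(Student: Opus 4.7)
The plan is to verify convexity directly from the definition by chaining the convexity inequality for $Q$ with the monotonicity and convexity of $P$. Fix $\mathbf{x}, \mathbf{y} \in \Omega_2$ and $\lambda \in [0,1]$. I would first observe that $\lambda \mathbf{x} + (1-\lambda)\mathbf{y} \in \Omega_2$ since $\Omega_2$ is convex, so the left-hand side $P(Q(\lambda \mathbf{x} + (1-\lambda)\mathbf{y}))$ is well-defined. I also need $\lambda Q(\mathbf{x}) + (1-\lambda) Q(\mathbf{y}) \in \Omega_1$ before I can apply $P$ to it, which is immediate from convexity of $\Omega_1$.

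Next, I would apply convexity of $Q$ to get
\begin{equation*}
Q(\lambda \mathbf{x} + (1-\lambda)\mathbf{y}) \le \lambda Q(\mathbf{x}) + (1-\lambda) Q(\mathbf{y}).
\end{equation*}
Since $P$ is nondecreasing on $\Omega_1 \subseteq \R$, applying $P$ preserves the inequality:
\begin{equation*}
P\bigl(Q(\lambda \mathbf{x} + (1-\lambda)\mathbf{y})\bigr) \le P\bigl(\lambda Q(\mathbf{x}) + (1-\lambda) Q(\mathbf{y})\bigr).
\end{equation*}
Finally, convexity of $P$ bounds the right-hand side:
\begin{equation*}
P\bigl(\lambda Q(\mathbf{x}) + (1-\lambda) Q(\mathbf{y})\bigr) \le \lambda P(Q(\mathbf{x})) + (1-\lambda) P(Q(\mathbf{y})).
\end{equation*}
Chaining these two inequalities yields the convexity condition for $P \circ Q$ at the pair $(\mathbf{x}, \mathbf{y})$ with parameter $\lambda$, so $P \circ Q$ is convex on $\Omega_2$.

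There is essentially no obstacle: the result is classical, and the only subtlety worth flagging is the role of the monotonicity hypothesis on $P$. Without it, the step where we apply $P$ to the convexity inequality for $Q$ would fail, since a convex $P$ applied to both sides of an inequality need not preserve the ordering. The hypothesis $P$ nondecreasing is exactly what allows the two convexity inequalities to be composed in the correct direction.
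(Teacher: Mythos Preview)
Your proof is correct and is the standard argument for this classical fact. The paper itself does not prove the lemma; it merely states it with a citation and gives the illustrative example $P(x)=x^2$, so there is nothing further to compare.
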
 
An example of this is when $P(x) = x^2$. Then, if $Q$ is convex and nonnegative, we see that $Q^2$ is also convex on its domain.
\subsection{Constant Velocity in One Dimension}\label{ssection: 1d constant velocity}
Let $c$ be the constant wave velocity, and let our initial wave function be $f(t)$. For this section, we assume that $f(t)$ is a probability distribution. We assume that the final wave function at a fixed spatial location $d$ is of the form $f(t - T(c)),$ where $T(c)$ is the amount of time it takes to receive the wave signals at location $d$ as a function of the velocity $c$. Since the total distance is $d, T(c) = \frac{d}{c},$ which means the predicted wave function is $f(t - \frac{d}{c})$. Letting $c^*$ be the true value of the velocity gives us \begin{equation*}
    W_2^2\left(f\left(t-\frac{d}{c}\right),f\left(t-\frac{d}{c^*}\right)\right) = \left(\frac{d}{c} - \frac{d}{c^*}\right)^2,
\end{equation*} as the squared $W_2$ distance from \Cref{thm: shift_dilation_w2}, because the integral in \Cref{eq:W2_shift_formula} is zero. In addition, $\frac{d}{c} - \frac{d}{c^*}$ is a convex function of $c$ on $(0,\infty),$ and it is nonnegative on the interval $(0,c^*]$. Hence, by  \Cref{lem:compconvex}, the squared $W_2$ distance is convex on $(0,c^*]$.
\begin{rem}
Due to the convexity of the squared $W_2$ distance in the interval $(0,k^*]$, choosing a small value of $k$ as the initial guess guarantees being able to find the true velocity parameter $k^*$ through gradient--based optimization methods.
\end{rem}
\subsection{Non--constant Velocity in One Dimension}\label{ssection:1d nonconstant velocity}
When the velocity is non--constant, the d'Alembert solution \Cref{eq:d'Alembert_solution} does not hold anymore. Thus,
we study the convexity of the squared $W_2$ distance for several non--constant velocity models, shown in \Cref{fig:Velocity Model Graphs}. We first study a model where the velocity is piecewise constant, after which we add multiple pieces. Then, we study a model where the velocity is linearly increasing as a function of position. For all of these models, we assume that the source function $f(t)$ is a probability distribution.
\begin{figure}
\begin{subfigure}{.3\textwidth}
  \centering
  \includegraphics[width=1.0\textwidth]{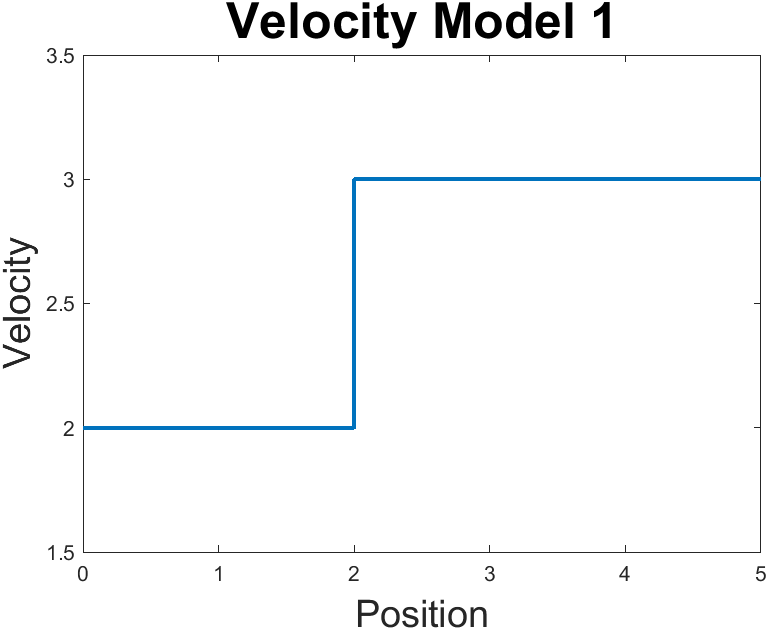}
  \caption{First velocity model.}
  \label{fig:velocitymodel1}
\end{subfigure}%
\begin{subfigure}{.3\textwidth}
  \centering
  \includegraphics[width=1.0\textwidth]{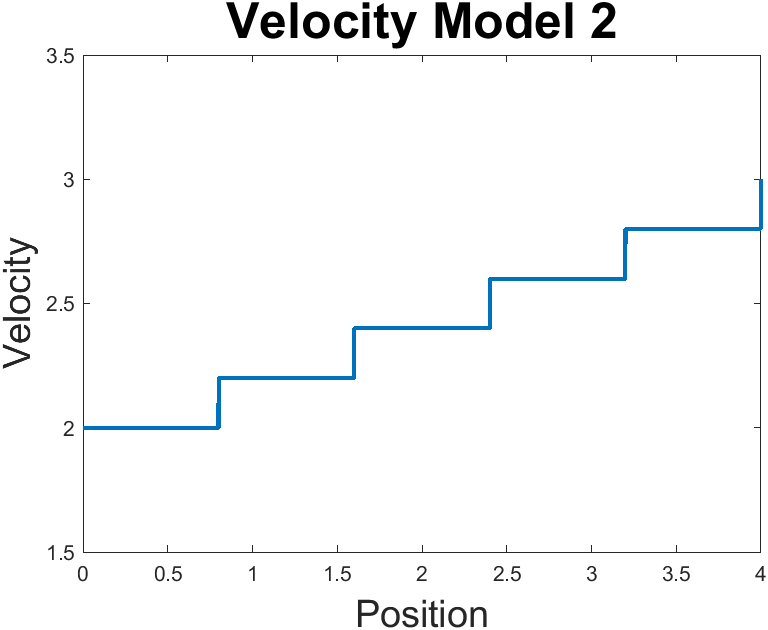}
  \caption{Second velocity model.}
  \label{fig:velocitymodel2}
\end{subfigure}%
\centering
\begin{subfigure}{.3\textwidth}
  \centering
  \includegraphics[width=1.0\textwidth]{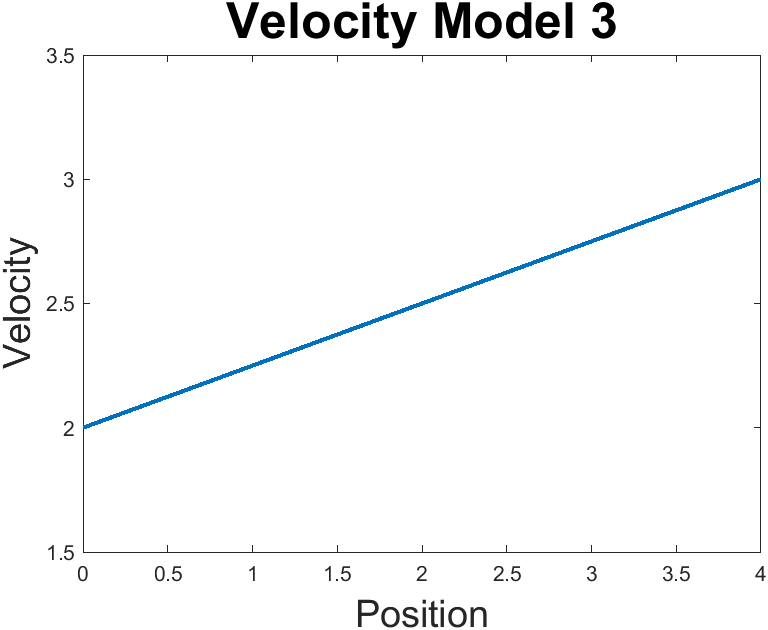}
  \caption{Third velocity model.}
  \label{fig:velocitymodel3}
 \end{subfigure}%
    \caption{Three types of non--constant velocity models in one dimension that we study.}
    \label{fig:Velocity Model Graphs}
\end{figure}
\subsubsection{Piecewise Constant Velocity} 
The first scenario we study is a piecewise constant velocity model.
Assume the velocity $v(x)$ satisfies $v(x) = c_1$ for $0\le x \le d_1$ and a known constant $c_1$, and $v(x) = c_2$ for $d_1 < x \le d_1 + d_2$, as seen in \Cref{fig:velocitymodel1} -- we show convexity in the unknown $c_2$. The total travel time, as a function of $c_2,$ is $T(c_2) = \frac{d_1}{c_1} + \frac{d_2}{c_2}$.
Letting $c_2^*$ be the true value of $c_2$ gives $\left(\frac{d_2}{c_2} - \frac{d_2}{c_2^*}\right)^2$ as the squared $W_2$ distance by \Cref{thm: shift_dilation_w2}. As the function $\frac{d_2}{c_2} - \frac{d_2}{c_2^*}$ is convex and nonnegative on $(0,c_2^*],$ the squared $W_2$ distance is also convex on $(0,c_2^*]$ by \Cref{lem:compconvex}.
\subsubsection{Piecewise Constant Velocity with Multiple Pieces}
The second scenario we consider is the piecewise constant velocity with $n$ pieces of equal length $d,$ such that the velocity of the wave is
\begin{equation*}
v(x) =
    \begin{cases}
 c_1, & x \le d, \\ 
 c_1 + k, & d<x\le 2d, \\ 
 \vdots & \vdots \\
 c_1 + (n-1)k, & (n-1)d < x \le nd. \\
 \end{cases}
\end{equation*} Here, the unknown variable is $k$. In this case, the total travel time is \begin{equation*}
    T(k) = \sum_{m=0}^{n-1} \frac{d}{c_1 + mk},
\end{equation*}
so the squared $W_2$ distance becomes $(T(k) - T(k^*))^2$ by \Cref{thm: shift_dilation_w2} where $k^*$ is the true value of the velocity parameter. We claim that the squared $W_2$ distance is convex in $k$ on the interval $[0,k^*]$. To do this, we initially show that $T(k)$ is convex in $k$ on $[0,\infty)$. Taking the second derivative, we get \begin{equation*}
    T''(k) = \sum_{m=0}^{n-1} \frac{2dm^2}{(c_1+mk)^3},
\end{equation*} which is nonnegative for all $k\ge 0$. Hence, $T(k)$ is convex in $k$ on $[0,\infty)$. Since $T(k)$ is strictly decreasing on $[0,\infty),$ the function $T(k) - T(k^*)$ is nonnegative (and also convex) on the interval $[0,k^*]$. Thus, the squared $W_2$ distance $(T(k) - T(k^*))^2$ is convex in $k$ on the interval $[0,k^*]$ by \Cref{lem:compconvex}.
\subsubsection{Linearly Increasing Velocity}
Next, we consider a linearly increasing velocity, which is of the form $c_1 + kx$ at a position $x,$ for a known constant $c_1$. Here, $k$ is unknown and $k^*$ is the true value of the velocity parameter. Letting $d$ be the total travel distance, we have the following integral representation for the traveltime: \begin{equation}\label{eq:1d_linear_time_integral}
    T(k) = \int_0^d \frac{\text{d}x}{c_1 + kx} = \frac{\ln(c_1 + kd) - \ln(c_1)}{k}.
\end{equation}
We first prove a lemma.
\begin{lem}\label{lem_linear_velocity}
$T(k)$ is convex in $k$ on the interval $[0,\infty)$, where $T(0) = \frac{d}{c_1}$.
\end{lem}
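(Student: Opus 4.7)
The plan is to use the integral representation of $T(k)$ rather than the closed form $\frac{\ln(c_1+kd)-\ln(c_1)}{k}$, since the integrand is a much cleaner function of $k$ and avoids the apparent singularity at $k=0$.

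First I would fix $x\in[0,d]$ and view $\varphi_x(k) = \frac{1}{c_1+kx}$ as a function of $k$ on $[0,\infty)$. Since $c_1>0$ and $x\ge 0$, the denominator is positive, and a direct computation gives $\varphi_x''(k) = \frac{2x^2}{(c_1+kx)^3} \ge 0$, so $\varphi_x$ is convex (and smooth) in $k$ on $[0,\infty)$ for every fixed $x\in[0,d]$.

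Next, because the integrand $\frac{1}{c_1+kx}$ and its first two $k$-derivatives are jointly continuous on the compact set $[0,d]\times[0,K]$ for any $K>0$, differentiation under the integral sign is justified. Hence
\begin{equation*}
T''(k) = \int_0^d \frac{2x^2}{(c_1+kx)^3}\,\mathrm{d}x \ge 0 \qquad \text{for all } k\in[0,\infty),
\end{equation*}
which immediately gives convexity of $T$ on $[0,\infty)$. (Equivalently, one can note that a nonnegative linear combination of convex functions is convex, so the integral of the convex family $\{\varphi_x\}_{x\in[0,d]}$ is convex in $k$.)

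Finally, I would address the apparent singularity at $k=0$ in the closed form. A Taylor expansion of $\ln(1+kd/c_1)$ in $k$ gives $T(k) = \frac{d}{c_1} - \frac{d^2}{2c_1^2}k + O(k^2)$ as $k\to 0^+$, matching the integral representation and confirming that $T$ extends continuously to $T(0) = d/c_1$, consistent with the convention in the lemma. I do not expect any real obstacle: the argument is a routine differentiation under the integral sign, and the only mild subtlety is justifying continuity/smoothness at the endpoint $k=0$, handled by the Taylor expansion above.
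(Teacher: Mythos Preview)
Your proposal is correct and follows essentially the same approach as the paper: differentiate under the integral sign (Leibniz rule) and observe that $\frac{\partial^2}{\partial k^2}\!\left(\frac{1}{c_1+kx}\right)=\frac{2x^2}{(c_1+kx)^3}\ge 0$, hence $T''(k)\ge 0$. You add a bit more care in justifying the interchange and in handling the endpoint $k=0$ via the Taylor expansion, but the core argument is identical.
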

\begin{proof}
We take the second derivative of $T$:
\begin{equation*}
    T''(k) = \frac{\text{d}^2}{\text{d}k^2}\int_0^d \frac{\text{d}x}{c_1 + kx} = \int_0^d \frac{\text{d}^2}{\text{d}k^2}\left(\frac{1}{c_1 + kx}\right)\hspace{1mm}\text{d}x,
\end{equation*} and the second equation follows by the Leibniz Integral Rule. Because $\frac{\text{d}^2}{\text{d}k^2}\left(\frac{1}{c_1 + kx}\right) = \frac{2x^2}{(c_1+kx)^3} \ge 0$, the integrand is nonnegative. This implies the convexity of $T(k)$.
\end{proof}
Now we are ready to show that $(T(k) - T(k^*))^2$ is convex on the interval $[0,k^*]$.
\begin{thm}\label{Theorem Linear Velocity W2 Convexity}
$(T(k) - T(k^*))^2$ is convex in $k$ on the interval $[0,k^*]$, where $T(0) = \frac{d}{c_1}.$
\end{thm}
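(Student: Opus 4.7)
The plan is to apply \Cref{lem:compconvex} with the outer function $P(x)=x^2$ and the inner function $Q(k)=T(k)-T(k^*)$, restricted to the interval $[0,k^*]$. The outer function $P$ is convex on $\R$ and, crucially, nondecreasing on $[0,\infty)$, so the hypothesis of \Cref{lem:compconvex} will be met as soon as we verify that $Q$ is convex and takes values in $[0,\infty)$ throughout $[0,k^*]$.

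Convexity of $Q$ is immediate: $T(k)$ is convex on $[0,\infty)$ by \Cref{lem_linear_velocity}, and subtracting the constant $T(k^*)$ preserves convexity. The remaining task is to confirm that $Q(k)\ge 0$ on $[0,k^*]$, i.e.\ that $T$ is nonincreasing on this interval. The cleanest way is to revisit the integral representation in \Cref{eq:1d_linear_time_integral}: for every fixed $x\in[0,d]$ the integrand $\frac{1}{c_1+kx}$ is nonincreasing in $k$ on $[0,\infty)$, so $T$ itself is nonincreasing there. In particular $T(k)\ge T(k^*)$ for all $k\in[0,k^*]$, giving $Q(k)\ge 0$ on that interval.

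With those two properties in hand, \Cref{lem:compconvex} applied to $P\circ Q$ yields convexity of $(T(k)-T(k^*))^2$ on $[0,k^*]$, which is exactly the claim. No step requires more than the information already assembled: Lemma~\ref{lem_linear_velocity} supplies convexity of $T$, the monotonicity of the integrand supplies the sign of $Q$, and \Cref{lem:compconvex} packages them into the desired conclusion. If there is any subtlety at all, it is only the boundary behaviour at $k=0$, where we must recall that $T(0)$ is defined by the limit $d/c_1$ so that $T$ is genuinely continuous (and hence convex and nonincreasing) on the closed interval $[0,k^*]$; this is exactly the convention already adopted in \Cref{lem_linear_velocity}, so nothing further is needed.
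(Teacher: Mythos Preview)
Your proof is correct and follows essentially the same approach as the paper: you invoke \Cref{lem_linear_velocity} for convexity of $T(k)-T(k^*)$, use the integral representation \Cref{eq:1d_linear_time_integral} to deduce that $T$ is decreasing and hence $T(k)-T(k^*)\ge 0$ on $[0,k^*]$, and then apply \Cref{lem:compconvex} with $P(x)=x^2$. Your remark about the boundary behaviour at $k=0$ is a nice clarification, but otherwise the arguments coincide.
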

\begin{proof}
By \Cref{lem_linear_velocity}, $T(k) - T(k^*)$ is convex. As the square of a nonnegative convex function is convex by \Cref{lem:compconvex}, it is enough to determine the interval in which $T(k) - T(k^*)$ is nonnegative. From the integral representation of $T(k)$ in \Cref{eq:1d_linear_time_integral} we see that $T(k)$ is strictly decreasing in $k$. Hence, $T(k) - T(k^*)$ is nonnegative on the interval $[0,k^*]$ implying the convexity of $(T(k) - T(k^*))^2$ on this interval by \Cref{lem:compconvex}.
\end{proof}
\subsection{A Velocity Model in Two Dimensions}\label{subsection:2d_velocity_model}
Consider a velocity model in two dimensions where the predicted velocity at a point $(X,z)$ is of the form $v(X,z) = a+bz$ where $a$ and $b$ are positive constants, $X$ is the horizontal position of the ray, and $z$ is the current depth of the ray. We analyze the travel time of a ray emanating from a single receiver which has a final distance $X$ from the source, as well as the amplitude of the wave at a receiver. We compute the squared $W_2$ distance of the predicted wave function with the observed wave function (with velocity $a^* + b^*z$ at depth $z$) and aim to find a region in $\R^2$ for which this distance is convex in $(a,b).$
\subsubsection{Constant Amplitude}\label{sssection:constant_amplitude}

While the source function and the observed wave function always have the same amplitude in one dimension, for higher dimensions this is not the case. With a source function $f(t)$, the observed wave data is of the form $Af(t - T)$ for constants $A$ and $T$ because there are no reflections when the velocity is a continuous function of depth. However, we analyze the convexity with the assumption that $A = 1$. We treat the observed wave data as $f(t - T(X,a^*,b^*))$ where $T(X,a,b) = T_{pred}$ is the predicted traveltime expressed as a function of $a,b,$ and $X$ and $T(X,a^*,b^*) = T_{obs}$ is the observed traveltime of the wave data. $X$ is the receiver location, or the distance from the source to the receiver. Furthermore, we assume that $f(t)$ is a probability distribution with compact support in the interval $[p_1,p_2]\subset[0,\mathcal{T}]$, where $\mathcal{T} > T_{obs} + p_2$. When the source function is a probability distribution, this method is equivalent to traveltime tomography because the squared $W_2$ distance is equal to $(T_{pred} - T_{obs})^2$ by \Cref{thm: shift_dilation_w2}. 

Even when the observed wave data is not a probability distribution, we may normalize the data to reduce it to this case. For example, if we normalize a nonnegative function $k(t)$ of the form $Af(t-T)$ with compact support in the interval $[0,\mathcal{T}]$ using the formula
$\widetilde{k}(t) = \frac{k(t)}{\int_0^\mathcal{T} k(t)\hspace{1mm}\text{d}t}$, where $\widetilde{k}(t)$ is the normalized wave data, then the amplitude of the normalized data remains constant. Thus, the squared $W_2$ distance becomes $(T_{pred} - T_{obs})^2$.
\begin{figure}
\begin{subfigure}{.5\textwidth}
  \centering
  \includegraphics[scale = 0.5]{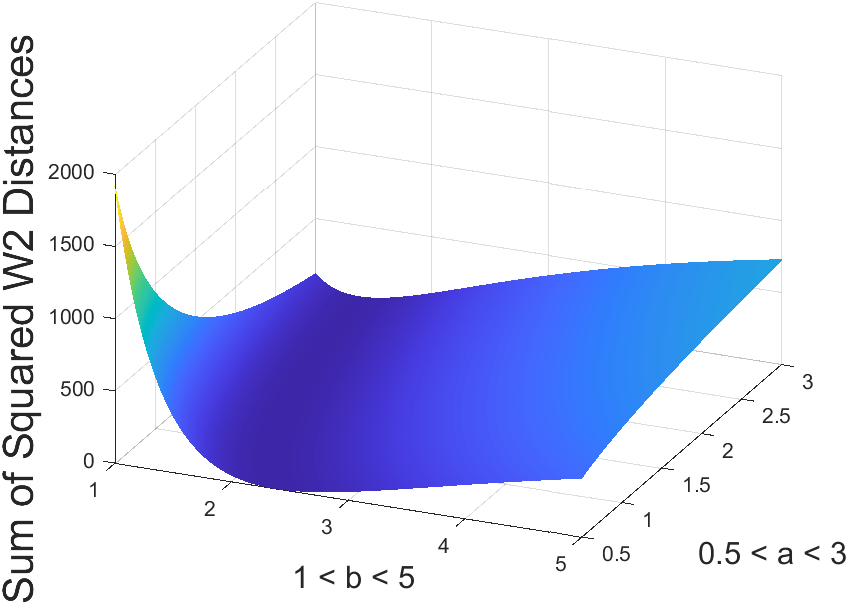}
    \caption{Sum of squared $W_2$ distance}
    \label{fig:W2noamplitude}
\end{subfigure}%
\begin{subfigure}{.5\textwidth}
  \centering
  \includegraphics[scale = 0.5]{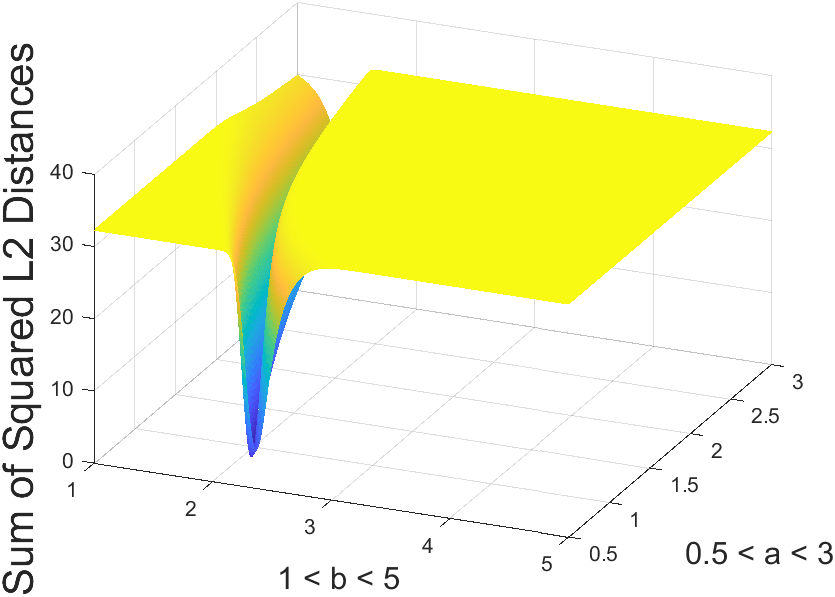}
  \caption{Sum of squared $L^2$ norm}
  \label{fig:L2noampsum}
\end{subfigure}%
\caption{Here, $f(t) = e^{-50(t-5)^2}$ is the source function and $(a^*,b^*) = (1,2)$ is the true value of the velocity parameter. As $f$ is not a probability density function, the wave data, which is of the form $g(X,t,a,b) = f(t - T(X,a,b))$, is normalized according to the formula $\displaystyle{\widetilde{g}(X,t,a,b)= \frac{g(X,t,a,b)}{\int_0^\mathcal{T}g(X,t,a,b)\hspace{1mm}\text{d}t}}$ where $\mathcal{T} = 50$. The objective function
$\mathcal{W}(a,b)=\displaystyle{\sum_{X_r = 10}^{100}W_2^2(\widetilde{g}(t - T(X_r,a,b)),\widetilde{g}(t - T(X_r,a^*,b^*)))}$, where $X_r$ is the location of receiver $r$, is compared with $\mathcal{L}(a,b) = \displaystyle{\sum_{X_r = 10}^{100}\int_0^\mathcal{T} |f(t - T(X_{r},a,b))-f(t - T(X_{r},a^*,b^*))|^2 \hspace{1mm}\text{d}t}$.}
\end{figure}
Using this expression for the squared $W_2$ distance, we can determine its convexity in a large region containing $(a^*,b^*)$ and points arbitrarily close to the origin.

We first explicitly compute $T(X,a,b)$. 
Letting the ray parameter be $p,$ we find that $X = \frac{2\eta_1}{bu_1p}$ from \Cref{distance} where $u_1 = \frac{1}{a}$ is the initial slowness, and $\eta_1 = \sqrt{u_1^2 - p^2}$ is the initial vertical slowness. We can solve for $p$ and $\eta_1$ in terms of $X,a,$ and $b$ as 
\begin{equation*}
    p = \frac{2}{\sqrt{b^2X^2 + 4a^2}},\hspace{5mm} \eta_1 = \frac{bX}{a\sqrt{b^2X^2 + 4a^2}}.
\end{equation*}
From these formulas as well as \Cref{traveltime}, we can solve for the time: \begin{equation*}
    T(X,a,b) = \frac{2}{b}\left(\ln\left(\frac{\sqrt{b^2X^2 + 4a^2} + bX}{2a}\right)\right).
\end{equation*} Then, the squared $W_2$ distance becomes $(T(X,a,b) - T(X,a^*,b^*))^2$ where $v_1(X,z) = a^* + b^*z$ is the true velocity function. We first claim that $T(X,a,b)$ is convex subject to a restriction on $\frac{bX}{2a}$.
\begin{lem}\label{lem_linear_depth}
Let $S_0 > 0$ be the largest root of the equation $S^2\phi(S) + 2\phi(S) - S^2 = 0$, where \begin{equation*}
    \phi(S) = 2\left(1 + \frac{1}{S^2}\right)^{\frac{3}{2}}\ln(\sqrt{S^2 + 1} + S) - \frac{2}{S^2} - 3.
\end{equation*}
Then, the traveltime $T(X,a,b)$ is jointly convex in $(a,b)$ whenever $\frac{bX}{2a} \ge S_0$.
\end{lem}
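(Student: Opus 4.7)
The plan is to reduce joint convexity of $T(X,a,b)$ to a single scalar inequality in the dimensionless variable $S := bX/(2a)$. Setting $R := \sqrt{S^2+1}$ and $L := \sinh^{-1}(S) = \ln(\sqrt{S^2+1}+S)$, the given formula rewrites as $T = (2/b)L$, and one has $\partial_a S = -S/a$ and $\partial_b S = S/b$. I will compute all three second partials of $T$ by direct differentiation, keeping everything in the variables $S, R, L$ (plus $a, b, X$). I anticipate
\begin{equation*}
T_{aa} = \frac{X(S^2+2)}{a^3 R^3}, \qquad T_{ab} = \frac{XS^2}{a^2 b R^3}, \qquad T_{bb} = \frac{2\bigl(2LR^3 - S(3S^2+2)\bigr)}{b^3 R^3}.
\end{equation*}
In particular $T_{aa} > 0$ for $a,b > 0$, so by Sylvester's criterion joint convexity reduces to $\det \mathrm{Hess}(T) \ge 0$.

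The main piece of algebra is the determinant. Factoring out $X/(a^4 b^3 R^6)$ from $T_{aa}T_{bb} - T_{ab}^2$ and using the identity $Xb = 2aS$ to rewrite the $T_{ab}^2$ contribution, I expect the bracketed numerator to reduce via
\begin{equation*}
(S^2+2)\bigl(2LR^3 - S(3S^2+2)\bigr) - S^5 \;=\; 2LR^3(S^2+2) - 4S(S^2+1)^2 \;=\; 2R^3\bigl[(S^2+2)L - 2SR\bigr],
\end{equation*}
where the middle equality uses $(S^2+2)(3S^2+2)=3S^4+8S^2+4$ and $S^4+2S^2+1=R^4$. Consequently
\begin{equation*}
\det \mathrm{Hess}(T) \;=\; \frac{4X\,F(S)}{a^3 b^3 R^3}, \qquad F(S) := (S^2+2)L - 2SR,
\end{equation*}
so joint convexity holds precisely when $F(S) \ge 0$.

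Finally I will tie $F$ back to the statement. A short manipulation using $2 + S^2 + S^{-2} = R^4/S^2$ gives
\begin{equation*}
S^2\phi(S) + 2\phi(S) - S^2 \;=\; \frac{2R^3}{S^3}\,F(S),
\end{equation*}
so $F$ and $S^2\phi(S) + 2\phi(S) - S^2$ have matching sign and the same zero set on $(0,\infty)$; in particular the largest root of the equation in the statement coincides with the largest root of $F$. Since $L \sim \ln(2S)$ and $R \sim S$ as $S \to \infty$, one has $F(S) \sim S^2(\ln(2S) - 2) \to +\infty$, and by continuity $F$ cannot return to negative values past its largest root $S_0$. Hence $F \ge 0$ on $[S_0,\infty)$, yielding the stated convexity whenever $bX/(2a) \ge S_0$. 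The chief obstacle I expect is the bookkeeping collapse in the Hessian determinant; the remaining steps are direct differentiation, one algebraic identity, and an asymptotic at infinity.
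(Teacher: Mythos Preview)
Your proposal is correct and follows essentially the same route as the paper: compute the Hessian of $T$ in $(a,b)$, invoke Sylvester's criterion (the $(1,1)$ entry is manifestly positive), and reduce positive semidefiniteness to a single scalar inequality in $S=bX/(2a)$, whose sign at infinity forces nonnegativity for all $S\ge S_0$. The only cosmetic difference is packaging: the paper factors the Hessian as $\tfrac{2X}{y_1^3}\mathbf{H}$ and works with $\det\mathbf{H}=4X^2\bigl(S^2\phi(S)+2\phi(S)-S^2\bigr)$ directly, whereas you pass through the cleaner auxiliary $F(S)=(S^2+2)L-2SR$ and then observe the identity $S^2\phi(S)+2\phi(S)-S^2=\tfrac{2R^3}{S^3}F(S)$; these are the same computation up to a positive multiplier. (One trivial slip: the prefactor you name when ``factoring out'' should be $X/(a^3b^3R^6)$ rather than $X/(a^4b^3R^6)$, but your subsequent algebra and final expression $\det\mathrm{Hess}(T)=4XF(S)/(a^3b^3R^3)$ are correct.)
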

\begin{proof}
Let $y_1 = \sqrt{b^2X^2 + 4a^2}$ and $S = \frac{bX}{2a}$.
We can compute the first order derivatives of $T(X,a,b)$ as
\begin{equation*}
    \frac{\partial T}{\partial a} = \frac{-2X}{ay_1}\hspace{2mm}\text{and}\hspace{2mm}
    \frac{\partial T}{\partial b} = \frac{2X}{by_1} - \frac{2\ln\left(\sqrt{S^2+1}+S\right)}{b^2}.
\end{equation*}
The Hessian matrix of $T(X,a,b)$, where we treat $X$ as a constant, becomes $\frac{2X}{y_1^3}\mathbf{H}$, where
\begin{equation*}
  \mathbf{H} =  \begin{bmatrix}
  8 + 4S^2 & 2SX \\
  2SX & X^2\left(\frac{2(S^2+1)^{\frac{3}{2}}}{S^3}\ln(\sqrt{S^2 + 1} + S) - \frac{2}{S^2} - 3\right)\\
\end{bmatrix}.
\end{equation*}
To prove the convexity of $T(X,a,b)$, it is enough to show that the Hessian matrix of $T(X,a,b)$ is positive semidefinite, which is equivalent to showing that $\mathbf{H}$ is positive semidefinite.
Using Sylvester's Criterion, because $8+4S^2$ is always positive, we see that $\mathbf{H}$ is positive semidefinite exactly when $\det\mathbf{H} \ge 0.$ Letting \begin{equation*}
    \phi(S) = 2(1 + \frac{1}{S^2})^{\frac{3}{2}}\ln(\sqrt{S^2 + 1} + S) - \frac{2}{S^2} - 3,
\end{equation*}
we see that \begin{equation*}
    \det\mathbf{H} = 8X^2\phi(S) + 4S^2X^2(\phi(S) - 1) = 4X^2(S^2\phi(S) + 2\phi(S) - S^2),
\end{equation*}
so it is enough to show that $S^2\phi(S) + 2\phi(S) - S^2 \ge 0$ for all $S \ge S_0$. Since $S_0$ is the largest root of $S^2\phi(S) + 2\phi(S) - S^2 = 0$, it is enough to show this inequality for all sufficiently large $S$. Observe that $\phi(S) \ge \ln(S)$ for all sufficiently large $S$, implying that $S^2\phi(S) + 2\phi(S) - S^2 \ge S^2(\ln(S) - 1) + 2\ln(S),$ which is at least $0$ for all sufficiently large $S$. Since $S = \frac{bX}{2a}$, this proves the lemma.
\end{proof}
We are now ready to prove the convexity of $(T(X,a,b) - T(X,a^*,b^*))^2$ over a certain region $U$.
\begin{thm}\label{thm:2d_time_shift}
Let $a^*$ and $b^*$ be positive constants, and let
\begin{equation*}
    U := \{(a,b)\in\R^2: a,b > 0, \frac{bX}{2a}\ge S_0, T(X,a,b) \ge T(X,a^*,b^*) \},
\end{equation*}
where $X > 0$ is a fixed constant. Then $U$ is nonempty and $(T(X,a,b) - T(X,a^*,b^*))^2$ is jointly convex in $(a,b)\in U$.
\end{thm}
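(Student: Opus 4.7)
The proof is essentially an assembly of the tools already in hand, so the plan is to check each hypothesis of \Cref{lem:compconvex} in turn rather than to do any new calculation. I would take $P(x) = x^2$ restricted to $[0,\infty)$ as the outer function (which is convex and nondecreasing there) and $Q(a,b) = T(X,a,b) - T(X,a^*,b^*)$ as the inner function, and argue that on $U$ the inner function is convex and nonnegative, so that $P \circ Q = (T(X,a,b) - T(X,a^*,b^*))^2$ is jointly convex in $(a,b) \in U$.

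Concretely, I would proceed in three short steps. First, \Cref{lem_linear_depth} gives joint convexity of $T(X,a,b)$ on the half-space $\{(a,b) : a,b > 0,\ bX/(2a) \ge S_0\}$, and subtracting the constant $T(X,a^*,b^*)$ preserves convexity, so $Q$ is convex on that half-space and in particular on $U$. Second, the third defining condition of $U$ is exactly $Q(a,b) \ge 0$, so $Q$ takes values in $[0,\infty)$ on $U$, which is precisely the domain on which $P(x) = x^2$ is nondecreasing. Third, invoking \Cref{lem:compconvex} with these choices of $P$ and $Q$ yields the desired convexity of $(T(X,a,b) - T(X,a^*,b^*))^2$ on $U$.

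The only real subtlety, and the step I would be most careful about, is verifying that $U$ is itself a convex subset of $\R^2$, since \Cref{lem:compconvex} requires a convex domain. The conditions $a > 0$, $b > 0$, and $bX/(2a) \ge S_0$ (equivalently $bX \ge 2aS_0$) each cut out a convex half-space, and their intersection is convex. The superlevel set $\{T(X,a,b) \ge T(X,a^*,b^*)\}$ of the convex function $T$ is not convex in general, but intersected with the region where $T$ is convex, any line segment joining two points of $U$ is a segment along which $Q$ is convex and nonnegative at the endpoints, so the composition argument still applies along each such segment; this is sufficient to conclude joint convexity on $U$ in the usual sense. Assembling these three observations gives a complete proof without any new calculation beyond what was done for \Cref{lem_linear_depth}.
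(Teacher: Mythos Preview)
Your first two paragraphs are exactly the paper's proof: invoke \Cref{lem_linear_depth} for convexity of $Q(a,b)=T(X,a,b)-T(X,a^*,b^*)$, note $Q\ge 0$ on $U$ by definition, and apply \Cref{lem:compconvex} with $P(x)=x^2$. The paper says nothing more than this.

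Your third paragraph goes beyond the paper by worrying about whether $U$ is convex, which \Cref{lem:compconvex} formally requires and which the paper silently ignores. You are right that the first three defining conditions cut out a convex region $W$ (since $bX\ge 2aS_0$ is linear), while $\{Q\ge 0\}$ is a \emph{super}level set of a convex function and need not be convex. However, your proposed fix --- ``$Q$ is convex on each segment and nonnegative at the endpoints, so the composition argument still applies along each such segment'' --- does not work as stated. A convex function can dip below zero strictly between two nonnegative endpoint values (take $q(t)=4(t-\tfrac12)^2-\tfrac{1}{10}$ on $[0,1]$), and then $q^2$ is \emph{not} convex on that segment: $(q^2)''(\tfrac12)=2q'(\tfrac12)^2+2q(\tfrac12)q''(\tfrac12)<0$.

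What does go through is a direct check of the convexity inequality at points of $U$: if $x_1,x_2\in U$ and $x_\lambda=(1-\lambda)x_1+\lambda x_2\in U$, then the segment lies in $W$, so $Q(x_\lambda)\le (1-\lambda)Q(x_1)+\lambda Q(x_2)$; all three values are nonnegative, so squaring (monotone on $[0,\infty)$) and then applying convexity of $x\mapsto x^2$ gives $Q(x_\lambda)^2\le (1-\lambda)Q(x_1)^2+\lambda Q(x_2)^2$. This yields the inequality at every point of $U$, which is the most one can ask on a potentially non-convex domain, and is presumably what the paper intends.
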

\begin{proof}
First, as $a$ decreases, both $\frac{bX}{2a}$ and $T(X,a,b)$ increase and approach infinity, implying that $U$ is nonempty. From \Cref{lem_linear_depth} we see that the function $T(X,a,b) - T(X,a^*,b^*)$ is jointly convex in $(a,b)\in U$. In addition, as $T(X,a,b) \ge T(X,a^*,b^*)$ on $U$ we have that $T(X,a,b) - T(X,a^*,b^*)$ is also nonnegative on $U$. Thus, by \Cref{lem:compconvex}, the squared $W_2$ distance, which is $(T(X,a,b) - T(X,a^*,b^*))^2$, is jointly convex in $(a,b)\in U$.
\end{proof}
\begin{rem}
To ensure that $(a^*,b^*)\in U$, it is enough to require that $\frac{b^*X}{2a^*} \ge S_0$ because $T(X,a^*,b^*) - T(X,a^*,b^*)$ is equal to $0$. Thus,
choosing large values of $X$ will ensure that the squared $W_2$ distance is convex in $(a,b)$ in a region containing $(a^*,b^*)$. To find a suitable initial guess, we may choose a point $(a,b)$ such that $\frac{bX}{2a} \ge S_0$ and scale it by a sufficiently small constant in order to satisfy the condition $T(X,a,b) \ge T(X,a^*,b^*)$.
\end{rem}
However, this function is not suitable as an objective function because $(T(X,a,b) - T(X,a^*,b^*))^2$ may equal $0,$ its minimum, even when $a$ is not equal to $a^*$ or $b$ is not equal to $b^*$ -- there is not enough information to find $a^*$ and $b^*$ through one receiver alone. To fix this, we add multiple receiver locations $X_r$. 
\begin{fact}
The equation $T(X,a_1,b_1) - T(X,a_2,b_2) = 0$ has at most one solution in $X > 0$ where $(a_1,b_1)\neq (a_2,b_2)$ are fixed ordered pairs of positive real numbers.
\end{fact}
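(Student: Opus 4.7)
The plan is to reduce the statement to a one--variable calculus argument for the smooth function
\begin{equation*}
    f(X) \ := \ T(X,a_1,b_1) - T(X,a_2,b_2), \qquad X \ge 0.
\end{equation*}
First I would rewrite the explicit formula for $T$ derived above in the cleaner form
\begin{equation*}
    T(X,a,b) \ = \ \frac{2}{b}\sinh^{-1}\!\left(\frac{bX}{2a}\right),
\end{equation*}
using the identity $\sinh^{-1}(y) = \ln\!\big(y + \sqrt{y^2+1}\big)$. Differentiating in $X$ then gives the clean expression
\begin{equation*}
    \frac{\partial T}{\partial X}(X,a,b) \ = \ \frac{2}{\sqrt{4a^2 + b^2 X^2}}.
\end{equation*}
Since $T(X,a,b) \to 0$ as $X \to 0^+$, the function $f$ extends continuously to $[0,\infty)$ with $f(0)=0$, so counting zeros in $X>0$ is the same as counting zeros of $f$ on $(0,\infty)$.

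Next I would count critical points of $f$ on $(0,\infty)$. Setting $f'(X) = 0$ and squaring reduces to the linear equation in $X^2$
\begin{equation*}
    (b_1^2 - b_2^2)\,X^2 \ = \ 4(a_2^2 - a_1^2).
\end{equation*}
If $b_1 = b_2$, the hypothesis $(a_1,b_1) \neq (a_2,b_2)$ forces $a_1 \neq a_2$, so the equation has no solution and $f'$ is nowhere zero on $(0,\infty)$. If $b_1 \neq b_2$, the equation in $X^2$ has at most one positive root $X$. In either case, $f'$ vanishes at most once on $(0,\infty)$.

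Finally, a monotonicity argument closes the proof. If $f'$ has no zero on $(0,\infty)$, then $f$ is strictly monotone on $[0,\infty)$, and $f(0)=0$ forces $f(X) \neq 0$ for all $X>0$. If $f'$ has a unique zero $X_0 \in (0,\infty)$, then $f$ is strictly monotone on $[0,X_0]$ and on $[X_0,\infty)$ with opposite directions; from $f(0)=0$, $f$ is nonzero on $(0,X_0]$, and strict monotonicity on $[X_0,\infty)$ permits at most one further zero. Hence $f$ has at most one positive zero.

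The whole argument is essentially routine once the $\sinh^{-1}$ form of $T$ is adopted, which reduces the critical-point condition to a linear equation in $X^2$; the only mildly delicate bookkeeping is the degenerate case $b_1 = b_2$, which positivity together with the hypothesis $(a_1,b_1) \neq (a_2,b_2)$ disposes of immediately.
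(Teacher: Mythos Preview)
Your proof is correct and follows essentially the same approach as the paper: both arguments compute $\partial T/\partial X = 2/\sqrt{4a^2+b^2X^2}$, use $T(0,a,b)=0$, and exploit that the equation $f'(X)=0$ reduces to a linear equation in $X^2$ with at most one positive root. The only stylistic difference is that the paper packages the final step as a contradiction via the Mean Value Theorem (three zeros of $f$ on $[0,\infty)$ would force two positive zeros of $f'$), whereas you argue directly via monotonicity on the two subintervals; these are equivalent.
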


\begin{proof}
Assume for the sake of contradiction that the equation $T(X,a_1,b_1) = T(X,a_2,b_2)$ has two positive solutions.
Since $T(0,a_1,b_1) = T(0,a_2,b_2) = 0$, there are at least three nonnegative solutions in $X$ to the equation $T(X,a_1,b_1) = T(X,a_2,b_2)$. Therefore, the equation $T'(X,a_1,b_1) = T'(X,a_2,b_2)$, or equivalently, $$\dfrac{2}{\sqrt{b_1^2X^2+4a_1^2}}=\dfrac{2}{\sqrt{b_2^2X^2+4a_2^2}},$$ has at least two positive solutions in $X$ by the Mean Value Theorem. Thus, there are two positive solutions to the equation $(b_1^2-b_2^2)X^2 + (4a_1^2 - 4a_2^2) = 0$, a contradiction.
\end{proof}
This implies that the value of $T(X_r,a,b)$ at two different receiver locations $X_r$ uniquely determines the pair $(a,b)$. We use the objective function $\sum_r(T(X_r,a,b) - T(X_r,a^*,b^*))^2$,
where the sum is over the receivers $r$, and the $X_r$ are the receiver locations. We plot this objective function in \Cref{fig:W2noamplitude}, where $a^* = 1$ and $b^* = 2$ and the $X_r$ range from $10$ to $100$ inclusive. We compare it to the sum of the squared $L^2$ norm where the set of receiver locations is the same and the source function is $f(t) = e^{-50(t-5)^2}$. The $L^2$--based objective function is mostly flat with a sharp incline close to the true velocity parameter $(1,2)$. It is clearly nonconvex, as shown in \Cref{fig:L2noampsum}. On the other hand, the $W_2$--based objective function (which does not depend on the source function, as long as it is nonnegative) appears convex in $(a,b)$.
\subsubsection{Varying Amplitude}\label{sssection: vary_amplitude}
In general, the amplitude of the wave equation solution is non--constant. The predicted wave function at receiver $X$ can be considered to be of the form 
\begin{equation*}
    g(t,a,b) = A(X,a,b)f(t - T(X,a,b)),
\end{equation*} 
where $A(X,a,b)$ is the amplitude as a function of the receiver location $X$ and the velocity parameters $a,b$. The observed wave function at receiver $X$ can be considered of the form $g(t,a^*,b^*)$ where $(a^*,b^*)$ is the true velocity parameter.
Using \Cref{amplitude}, we obtain 
\begin{equation*}
    A(X,a,b) = \frac{b}{\sqrt{b^2X^2 + 4a^2}\left(\frac{\pi}{2} - \arcsin\left(\frac{2a}{\sqrt{b^2X^2 + 4a^2}}\right)\right)}.
\end{equation*} 
Since $b$ and $X$ are positive, we may simplify this expression to get \begin{equation}\label{eq:amp_simplified}
A(X,a,b) = \frac{1}{X\sqrt{1 + \left(\frac{2a}{bX}\right)^2}\left(\frac{\pi}{2} - \arcsin\left(\frac{1}{\sqrt{\left(\frac{bX}{2a}\right)^2 + 1}}\right)\right)} = \frac{1}{RX\sqrt{1 + \left(\frac{1}{S}\right)^2}}
\end{equation} where 
\begin{equation*}
    S = \frac{bX}{2a}\hspace{2mm}\text{and}\hspace{2mm}
    R = \frac{\pi}{2} - \arcsin\left(\frac{1}{\sqrt{S^2 + 1}}\right).
\end{equation*} While the amplitude function in \Cref{eq:amp_simplified} is not fully accurate, it is still a good approximation if the velocity model is continuous. 

We will only consider points $(a,b)$ with $a > \frac{a^*}{100}$ and $\frac{b}{a} \ge \frac{b^*}{a^*}$, which is a positive constant independent of $X$. This also implies that $b > \frac{b^*}{100}$. We treat $X$ as a sufficiently large constant, which causes $S = \frac{bX}{2a} \ge \frac{b^*X}{2a^*}$ to be large as well. In particular, $X$ is taken to be large enough so that $\frac{b^*X}{2a^*} > S_0$.
For large $S$, the amplitude is approximately $\frac{2}{\pi X}$. Taking the derivative of the amplitude with respect to $S$ gives \begin{equation*}
\frac{\partial}{\partial S}A(X,S) =
     \frac{1}{X}\left(\dfrac{1}{\left(\frac{1}{S^2}+1\right)^\frac{3}{2}S^3R}-\dfrac{S}{\left(\frac{1}{S^2}+1\right)^{\frac{3}{2}}S^3R^2}\right) = \mathcal{O}\left(\frac{1}{S^2}\right).
\end{equation*} Thus,
\begin{equation*}
    \left|\frac{\partial}{\partial a}A(X,a,b)\right| = \left|\frac{\partial}{\partial a}S \cdot \frac{\partial}{\partial S}A(X,S)\right| = \left|\frac{-S}{a}\cdot \frac{\partial}{\partial S}A(X,S)\right| = \mathcal{O}\left(\frac{1}{S}\right)
\end{equation*}
because $a > \frac{a^*}{100}$. Similarly, because $b > \frac{b^*}{100}$,
\begin{equation*}
    \left|\frac{\partial}{\partial b}A(X,a,b)\right| = \left|\frac{\partial}{\partial b}S \cdot \frac{\partial}{\partial S}A(X,S)\right| = \left|\frac{S}{b}\cdot \frac{\partial}{\partial S}A(X,S)\right| = \mathcal{O}\left(\frac{1}{S}\right).
\end{equation*}
Therefore, it is reasonable to assume the amplitude remains unchanged for large values of $X$, and we let $A$ be this amplitude. We approximate a predicted wave function $g(t,a,b) = A(X,a,b)f(t - T(X,a,b))$ where $X$ is a fixed large constant by the new function $g_1(t,a,b) = Af(t - T(X,a,b))$ so it suffices to approximate the observed wave function $h(t) = g(t,a^*,b^*)$ by the wave function $h_1(t) = g_1(t,a^*,b^*)$.

As the $W_2$ distance is only defined when both of its inputs have total mass $1$, we normalize all wave data of the form $k(t)$ using the formula \begin{equation*}
    \widetilde{k}(t) = \frac{k(t) + \gamma}{\int_0^\mathcal{T} (k(t) + \gamma)\text{d}t}
\end{equation*} on the interval $[0,\mathcal{T}]$ and $0$ everywhere else. Here, $\gamma$ is a positive constant such that $k(t) + \gamma > 0$ for all $t$.

Eventually, we compute the squared $W_2$ distance between $\widetilde{g}_1(t,a,b)$ and $\widetilde{h}_1(t)$.
We have
\begin{equation}\label{eq:normalized_functions}
    \widetilde{g}_1(t,a,b) = \frac{Af(t-T_{pred}) + \gamma}{AI_0 + \gamma\mathcal{T}}\hspace{2mm}\text{and}\hspace{2mm}
    \widetilde{h}_1(t) = \widetilde{g}_1(t,a^*,b^*) = \frac{Af(t-T_{obs}) + \gamma}{AI_0 + \gamma\mathcal{T}},
\end{equation} where $\int_0^\mathcal{T} f(t)\hspace{1mm}\text{d}t = I_0$, $T_{pred} = T(X,a,b)$, and $T_{obs} = T(X,a^*,b^*)$. We also assume that $\gamma$ is large enough to ensure that $\widetilde{g}_1(t,a,b)$ and $\widetilde{h}_1(t)$ are strictly positive with total mass $1$.

Then, we let \begin{equation*}
    G(t,a,b) = \int_0^t \widetilde{g}_1(y,a,b)\hspace{1mm}\text{d}y \hspace{2mm}\text{and}\hspace{2mm}
    H(t) = \int_0^t \widetilde{h}_1(y)\hspace{1mm}\text{d}y = G(t,a^*,b^*).
\end{equation*}
Since $\widetilde{g}_1(t,a,b)$ and $\widetilde{h}_1(t)$ are both probability distributions, we may compute the squared $W_2$ distance between them using \Cref{eq:W2_formula} to get
\begin{equation*}
    W_2^2(\widetilde{g}_1,\widetilde{h}_1) = \int_0^1 (G^{-1}(s,a,b) - H^{-1}(s))^2\hspace{1mm}\text{d}s
\end{equation*} where $G^{-1}(s,a,b):[0,1]\to[0,\mathcal{T}]$ is the unique function satisfying
$$G(G^{-1}(s,a,b),a,b) = s\text{ and }
G^{-1}(G(t,a,b),a,b) = t$$ for all $s\in[0,1]$ and $t\in[0,\mathcal{T}]$. 

Observe that $\widetilde{g}_1 = \widetilde{g}_1(t,a,b)$ and $G = G(t,a,b)$ can be expressed as functions of time $t$ and the velocity parameters $a,b$, and $G^{-1}(s) = G^{-1}(s,a,b)$ can be expressed as a function of $s$ and $a,b$. It is also possible to express $\widetilde{g}_1 = \widetilde{g}_1(t,T_{pred})$ as a function of $t$ and the predicted traveltime $T_{pred}$ using \Cref{eq:normalized_functions}. Thus, we can alternatively express $G = G(t,T_{pred})$ as a function of $t$ and $T_{pred}$, and $G^{-1}(s) = G^{-1}(s,T_{pred})$ as a function of $s$ and $T_{pred}$. In the proofs of the following claims, we sometimes omit the variables $a,b,$ and $T_{pred}$ in the arguments of $\widetilde{g}_1, G,$ and $G^{-1}$.

We compute $\int_0^1(G^{-1}(s))^2\hspace{1mm}\text{d}s$ to simplify $W_2^2(\widetilde{g}_1,\widetilde{h}_1)$.
\begin{lem}\label{lem:G-inverse_square_integral}
Let $I_0 = \int_0^\mathcal{T}f(t)\hspace{1mm}\mathrm{d}t$, $I_1 = \int_0^\mathcal{T}tf(t)\hspace{1mm}\mathrm{d}t$, and $I_2 = \int_0^\mathcal{T}t^2f(t)\hspace{1mm}\mathrm{d}t$. Then, if $T_{obs}\le T_{pred}\le \mathcal{T} - p_2$,
\begin{equation*}
    \int_0^1(G^{-1}(s,T_{pred}))^2\hspace{1mm}\text{d}s = \frac{1}{I_0 + \frac{\gamma\mathcal{T}}{A}}\left(\frac{\gamma\mathcal{T}^3}{3A}+ I_2 + 2T_{pred}I_1 + T^2_{pred}I_0\right).
\end{equation*}
\end{lem}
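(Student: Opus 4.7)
The plan is to convert the integral into a second-moment integral of $\widetilde{g}_1$ via the change of variables $s = G(t,T_{pred})$, under which $\mathrm{d}s = \widetilde{g}_1(t,T_{pred})\,\mathrm{d}t$ and the endpoints $0$ and $1$ of the $s$-integral become $0$ and $\mathcal{T}$. This reduces the claim to evaluating
\begin{equation*}
\int_0^1 (G^{-1}(s,T_{pred}))^2\,\mathrm{d}s \;=\; \int_0^{\mathcal{T}} t^2\,\widetilde{g}_1(t,T_{pred})\,\mathrm{d}t \;=\; \frac{1}{AI_0+\gamma\mathcal{T}}\left(A\int_0^{\mathcal{T}} t^2 f(t-T_{pred})\,\mathrm{d}t + \gamma\int_0^{\mathcal{T}} t^2\,\mathrm{d}t\right),
\end{equation*}
where the second equality just substitutes the explicit form of $\widetilde{g}_1$ from \Cref{eq:normalized_functions} and splits the sum.

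The $\gamma$-piece is immediate: $\gamma\int_0^{\mathcal{T}} t^2\,\mathrm{d}t = \gamma\mathcal{T}^3/3$. For the shifted-source piece I substitute $u = t - T_{pred}$, obtaining $\int_{-T_{pred}}^{\mathcal{T}-T_{pred}} (u+T_{pred})^2 f(u)\,\mathrm{d}u$. This is the one place the hypothesis is used: $T_{obs}\le T_{pred}\le \mathcal{T}-p_2$, combined with $T_{obs}\ge 0$ and $\operatorname{supp} f\subset[p_1,p_2]\subset[0,\mathcal{T}]$, gives $-T_{pred}\le 0\le p_1$ and $p_2\le \mathcal{T}-T_{pred}$, so the new interval of integration still contains the full support of $f$. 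Hence the integral equals $\int_0^{\mathcal{T}}(u+T_{pred})^2 f(u)\,\mathrm{d}u$, and expanding the square and recognising the three moments gives $I_2 + 2T_{pred}I_1 + T_{pred}^2 I_0$.

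Assembling both pieces and dividing numerator and denominator of the prefactor by $A$ produces exactly the formula stated in the lemma. There is no deep obstacle; the only thing that requires care is verifying that the hypothesis on $T_{pred}$ is precisely what lets the shifted $f$-integral be computed as an untruncated moment integral of $f$, which is what allows $I_0$, $I_1$, $I_2$ to appear cleanly rather than as partial integrals.
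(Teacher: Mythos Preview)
Your proof is correct and follows essentially the same approach as the paper: the substitution $s=G(t)$ to convert to a second moment of $\widetilde{g}_1$, the split into the $\gamma$-piece and the shifted-$f$ piece, and the use of the support hypothesis to replace the shifted integration limits by $[0,\mathcal{T}]$ before expanding into $I_0,I_1,I_2$. The only cosmetic difference is that you are slightly more explicit about why both endpoints of the shifted interval contain $\operatorname{supp} f$, and about the final division of numerator and denominator by $A$.
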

\begin{proof}
From the substitution $s = G(t)$, we can compute $\int_0^1(G^{-1}(s))^2\hspace{1mm}\text{d}s$ as 
\begin{equation*}
    \int_0^1(G^{-1}(s))^2\hspace{1mm}\text{d}s = \int_0^\mathcal{T}t^2\widetilde{g}_1(t)\hspace{1mm}\text{d}t = \int_0^\mathcal{T}t^2\frac{f(t-T_{pred}) + \frac{\gamma}{A}}{I_0 + \frac{\gamma\mathcal{T}}{A}}\hspace{1mm}\text{d}t.
\end{equation*} We may write the integral as
\begin{equation*}
    \int_0^\mathcal{T}t^2\frac{f(t-T_{pred}) + \frac{\gamma}{A}}{I_0 + \frac{\gamma\mathcal{T}}{A}}\hspace{1mm}\text{d}t = \frac{1}{I_0 + \frac{\gamma\mathcal{T}}{A}}\left(\frac{\gamma\mathcal{T}^3}{3A}+\int_0^\mathcal{T}t^2{f(t-T_{pred})}\hspace{1mm}\text{d}t\right).
\end{equation*}
However, note that 
$
    \int_0^\mathcal{T}t^2{f(t-T_{pred})}\hspace{1mm}\text{d}t = \int_{-T_{pred}}^{\mathcal{T}-T_{pred}}(t+T_{pred})^2{f(t)}\hspace{1mm}\text{d}t.
$
As $f$ has compact support $[p_1,p_2]\subset [0,\mathcal{T}]$ and $T_{pred} \le \mathcal{T}-p_2$, we may write the integral as $\int_{0}^{\mathcal{T}}(t+T_{pred})^2{f(t)}\hspace{1mm}\text{d}t$. This becomes $I_2 + 2T_{pred}I_1 + T_{pred}^2I_0$, where $I_2 = \int_0^{\mathcal{T}}t^2f(t)\hspace{1mm}\text{d}t, I_1 = \int_0^{\mathcal{T}}tf(t)\hspace{1mm}\text{d}t$, and $I_0 = \int_0^{\mathcal{T}}f(t)\hspace{1mm}\text{d}t$, which proves the lemma.
\end{proof}
Now, observe that the squared $W_2$ distance may be expressed as a function of $T_{pred} = T(X,a,b)$, because $G^{-1}(s,T_{pred})$ is a function of both $s$ and $T_{pred}$. Using this, we may compute the first and second derivatives of $W_2^2(\widetilde{g}_1,\widetilde{h}_1)$ with respect to $T_{pred}$. To compute the first derivative of $W_2^2(\widetilde{g}_1,\widetilde{h}_1)$ with respect to $T_{pred}$, it is enough to find the first derivative of $G^{-1}$ with respect to $T_{pred}$.
\begin{lem}\label{lem:G-inverse_first_derivative_T_pred}
The first derivative of $G^{-1}(s,T_{pred})$ with respect to $T_{pred}$ is
\begin{equation}\label{eq:G-inverse_first_derivative_T_pred}
    \frac{\partial}{\partial T_{pred}}G^{-1}(s, T_{pred}) = \frac{f(G^{-1}(s, T_{pred})-T_{pred})}{f(G^{-1}(s,T_{pred})-T_{pred}) + \frac{\gamma}{A}} = 1 - \frac{\frac{\gamma}{A}}{f(G^{-1}(s,T_{pred})-T_{pred}) + \frac{\gamma}{A}}.
\end{equation}
\end{lem}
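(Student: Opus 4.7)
The plan is to obtain $\partial G^{-1}/\partial T_{pred}$ by implicit differentiation of the defining identity $G(G^{-1}(s,T_{pred}),T_{pred})=s$. Differentiating both sides in $T_{pred}$ via the chain rule gives
\begin{equation*}
\frac{\partial G}{\partial t}\bigl(G^{-1}(s,T_{pred}),T_{pred}\bigr)\cdot\frac{\partial G^{-1}}{\partial T_{pred}}(s,T_{pred})+\frac{\partial G}{\partial T_{pred}}\bigl(G^{-1}(s,T_{pred}),T_{pred}\bigr)=0,
\end{equation*}
so $\partial G^{-1}/\partial T_{pred}$ is the negative ratio of these two partials, evaluated at $t=G^{-1}(s,T_{pred})$. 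This reduces the problem to computing the two partial derivatives of $G$.

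The $t$-derivative is immediate from the fundamental theorem of calculus: $\partial G/\partial t = \widetilde{g}_1(t,T_{pred}) = (Af(t-T_{pred})+\gamma)/(AI_0+\gamma\mathcal{T})$. For the $T_{pred}$-derivative, I would differentiate under the integral sign (justified by smoothness and compact support of $f$, with the normalization constant $AI_0+\gamma\mathcal{T}$ independent of $T_{pred}$) to get
\begin{equation*}
\frac{\partial G}{\partial T_{pred}}(t,T_{pred}) = \frac{1}{AI_0+\gamma\mathcal{T}}\int_0^t \bigl(-Af'(y-T_{pred})\bigr)\,\mathrm{d}y = \frac{A\bigl(f(-T_{pred})-f(t-T_{pred})\bigr)}{AI_0+\gamma\mathcal{T}}.
\end{equation*}
The boundary term $f(-T_{pred})$ vanishes because $f$ is supported in $[p_1,p_2]\subset[0,\mathcal{T}]$ and the hypothesis $T_{obs}\le T_{pred}$ ensures $T_{pred}\ge 0$, so $-T_{pred}\le 0\le p_1$.

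Plugging these two expressions into the ratio, the common normalization factor cancels and I obtain
\begin{equation*}
\frac{\partial G^{-1}}{\partial T_{pred}}(s,T_{pred}) = \frac{Af\bigl(G^{-1}(s,T_{pred})-T_{pred}\bigr)}{Af\bigl(G^{-1}(s,T_{pred})-T_{pred}\bigr)+\gamma} = \frac{f\bigl(G^{-1}(s,T_{pred})-T_{pred}\bigr)}{f\bigl(G^{-1}(s,T_{pred})-T_{pred}\bigr)+\gamma/A},
\end{equation*}
which is the first equality in \eqref{eq:G-inverse_first_derivative_T_pred}. The second equality is the algebraic identity $x/(x+c)=1-c/(x+c)$ with $x=f(\cdots)$ and $c=\gamma/A$.

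The only subtle point, and the one I would highlight as the main obstacle, is the disappearance of the boundary term $f(-T_{pred})$: this is precisely where the standing compact support assumption $\operatorname{supp} f\subset[p_1,p_2]\subset[0,\mathcal{T}]$ together with the hypothesis $T_{obs}\le T_{pred}\le \mathcal{T}-p_2$ is used, both to justify differentiating under the integral without worrying about boundary effects of the translated function and to guarantee $\widetilde{g}_1(t,T_{pred})>0$ on $[0,\mathcal{T}]$ so that $G$ is strictly increasing and $G^{-1}$ is well-defined and differentiable. Everything else is routine chain-rule and Leibniz manipulation.
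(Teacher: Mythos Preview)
Your proposal is correct and is essentially the same argument as the paper's: the paper differentiates the identity $\int_0^{G^{-1}(s)}\widetilde g_1(t)\,\mathrm{d}t=s$ via the Leibniz rule, which is exactly your implicit differentiation of $G(G^{-1}(s,T_{pred}),T_{pred})=s$ written out with the integral in place of $G$. Both proofs compute the same two partials and use $f(-T_{pred})=0$ from the compact support of $f$ to kill the boundary term.
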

\begin{proof}
    Because
    $\int_0^{G^{-1}(s)}\widetilde{g}_1(t)\hspace{1mm}\text{d}t = G(G^{-1}(s)) = s,$
we may apply the Leibniz integral rule to see that
\begin{align*}
    0 = \frac{\partial}{\partial T_{pred}} s
    &= \frac{\partial}{\partial T_{pred}} \int_0^{G^{-1}(s)}\widetilde{g}_1(t)\hspace{1mm}\text{d}t& \\
    &= \widetilde{g}_1(G^{-1}(s))\frac{\partial}{\partial T_{pred}}G^{-1}(s) + \int_0^{G^{-1}(s)}\frac{\partial}{\partial T_{pred}}\widetilde{g}_1(t)\hspace{1mm}\text{d}t.&
\end{align*}
Thus, we have that 
\begin{equation*}
    \frac{\partial}{\partial T_{pred}}G^{-1}(s) = \frac{-\int_0^{G^{-1}(s)}\frac{\partial}{\partial T_{pred}}\widetilde{g}_1(t)\hspace{1mm}\text{d}t}{\widetilde{g}_1(G^{-1}(s))} = \frac{-\left(I_0 + \frac{\gamma\mathcal{T}}{A}\right)\int_0^{G^{-1}(s)}\frac{\partial}{\partial T_{pred}}\widetilde{g}_1(t)\hspace{1mm}\text{d}t}{f(G^{-1}(s)-T_{pred})+\frac{\gamma}{A}}.
\end{equation*}
From \Cref{eq:normalized_functions}, we see that
\begin{equation*}
    -\int_0^{G^{-1}(s)}\frac{\partial}{\partial T_{pred}}\widetilde{g}_1(t)\hspace{1mm}\text{d}t = \frac{\int_0^{G^{-1}(s)}f'(t-T_{pred})\hspace{1mm}\text{d}t}{I_0 + \frac{\gamma\mathcal{T}}{A}}
    =\frac{f(G^{-1}(s)-T_{pred})}{I_0 + \frac{\gamma\mathcal{T}}{A}},
\end{equation*} where we use the fact that $f(-T_{pred}) = 0$. Thus, \Cref{eq:G-inverse_first_derivative_T_pred} holds.
\end{proof}
Now, we use this to compute the first derivative of the squared $W_2$ distance with respect to $T_{pred}$.
\begin{lem}\label{lem:final_Tpred_derivative_W2}
Suppose $T_{obs}\le T_{pred}\le \mathcal{T} - p_2$. The first derivative of $W_2^2(\widetilde{g}_1(t,T_{pred}),\widetilde{h}_1(t))$ with respect to $T_{pred}$ is
\begin{equation*}
    \frac{2}{I_0 + \frac{\gamma\mathcal{T}}{A}} \left(\left(I_0-\frac{\gamma\mathcal{T}}{A}\right)(T_{pred}-T_{obs}) + \frac{\gamma}{A}\int_0^\mathcal{T}\int_{T_{obs}}^{T_{pred}}\frac{\frac{\gamma}{A}}{f(G^{-1}(H(t),y)-y)+\frac{\gamma}{A}}\hspace{1mm}\text{d}y\hspace{1mm}\text{d}t\right).
\end{equation*}
\end{lem}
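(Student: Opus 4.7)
My plan is to differentiate $W_2^2(\widetilde{g}_1,\widetilde{h}_1) = \int_0^1 (G^{-1}(s, T_{pred}) - H^{-1}(s))^2\,\mathrm{d}s$ under the integral sign and then decompose the resulting expression. Differentiation produces $\frac{dW_2^2}{dT_{pred}} = 2\int_0^1 (G^{-1}(s, T_{pred}) - H^{-1}(s))\,\frac{\partial G^{-1}}{\partial T_{pred}}(s, T_{pred})\,\mathrm{d}s$. Applying \Cref{lem:G-inverse_first_derivative_T_pred} in the form $\frac{\partial G^{-1}}{\partial T_{pred}} = 1 - \frac{\gamma/A}{f(G^{-1}(s, T_{pred}) - T_{pred}) + \gamma/A}$ splits this into a linear piece $2\int_0^1 (G^{-1}(s)-H^{-1}(s))\,\mathrm{d}s$ and a correction piece $-2\int_0^1\frac{(\gamma/A)(G^{-1}(s)-H^{-1}(s))}{f(G^{-1}(s, T_{pred}) - T_{pred}) + \gamma/A}\,\mathrm{d}s$.

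The linear piece is evaluated by mimicking the computation in \Cref{lem:G-inverse_square_integral} with $t$ in place of $t^2$: the substitution $s = G(t, T_{pred})$ converts $\int_0^1 G^{-1}(s, T_{pred})\,\mathrm{d}s$ into $\int_0^{\mathcal{T}} t\,\widetilde{g}_1(t, T_{pred})\,\mathrm{d}t$, which, using $T_{pred} \le \mathcal{T} - p_2$ to control the support of $f$, yields a closed form; subtracting the analogous expression with $T_{obs}$ leaves $\frac{2I_0(T_{pred}-T_{obs})}{I_0 + \gamma\mathcal{T}/A}$. For the correction piece, the same substitution $s = G(t, T_{pred})$ causes the denominator $f(G^{-1}(s, T_{pred}) - T_{pred}) + \gamma/A$ to cancel exactly against the Jacobian $\widetilde{g}_1(t, T_{pred})$, reducing the correction piece to $-\frac{2\gamma/A}{I_0+\gamma\mathcal{T}/A}\int_0^{\mathcal{T}}(t - H^{-1}(G(t, T_{pred})))\,\mathrm{d}t$.

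For the remaining integral, I use the fundamental theorem of calculus and the identity $H^{-1}(G(t, T_{obs}))=t$ to write $t - H^{-1}(G(t, T_{pred})) = -\int_{T_{obs}}^{T_{pred}} \partial_y H^{-1}(G(t, y))\,\mathrm{d}y$; the chain rule together with $\partial_y G(t, y) = -f(t-y)/(I_0 + \gamma\mathcal{T}/A)$ (which uses $f(-y)=0$ for $y>0$) then gives $\partial_y H^{-1}(G(t, y)) = -f(t-y)/\bigl(f(H^{-1}(G(t, y)) - T_{obs})+\gamma/A\bigr)$. The main obstacle I anticipate is that this expression features $H^{-1}(G(t, y))$, whereas the target statement involves $G^{-1}(H(t), y)$. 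I plan to resolve this by the change of variables $t = G^{-1}(H(u), y)$ at each fixed $y \in [T_{obs}, T_{pred}]$, whose Jacobian $\mathrm{d}t = \frac{f(u-T_{obs})+\gamma/A}{f(G^{-1}(H(u), y)-y)+\gamma/A}\,\mathrm{d}u$ cancels exactly against the remaining factors, collapsing the $t$-integral to $\int_0^{\mathcal{T}}\frac{f(G^{-1}(H(u), y)-y)}{f(G^{-1}(H(u), y)-y)+\gamma/A}\,\mathrm{d}u = \mathcal{T} - \int_0^{\mathcal{T}} \frac{\gamma/A}{f(G^{-1}(H(u), y)-y)+\gamma/A}\,\mathrm{d}u$. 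Integrating in $y$ produces a $-\frac{2\gamma\mathcal{T}/A(T_{pred}-T_{obs})}{I_0+\gamma\mathcal{T}/A}$ contribution that merges with the linear piece to yield the coefficient $I_0 - \gamma\mathcal{T}/A$ of $(T_{pred}-T_{obs})$, while the remaining double integral gives exactly the claimed expression.
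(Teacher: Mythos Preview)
Your argument is correct; the algebra checks out at every stage, including the change of variables $t = G^{-1}(H(u),y)$, whose Jacobian does indeed cancel the awkward denominator and produce the desired integrand in $u$.

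The route differs from the paper's mainly in how the conversion from $H^{-1}\!\circ G$ to $G^{-1}\!\circ H$ is handled. After the substitution $s = G(t)$ both arguments land on an expression involving $\int_0^{\mathcal{T}} H^{-1}(G(t,T_{pred}))\,\mathrm{d}t$. The paper disposes of this in one stroke using the inverse--function integral identity
\[
\int_0^{\mathcal{T}} H^{-1}(G(t))\,\mathrm{d}t \;+\; \int_0^{\mathcal{T}} G^{-1}(H(t))\,\mathrm{d}t \;=\; \mathcal{T}^2,
\]
valid because $H^{-1}\!\circ G$ and $G^{-1}\!\circ H$ are mutual inverses on $[0,\mathcal{T}]$; it then applies the fundamental theorem directly to $y\mapsto G^{-1}(H(t),y)$, where \Cref{lem:G-inverse_first_derivative_T_pred} supplies the derivative immediately. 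You instead apply the fundamental theorem to $y\mapsto H^{-1}(G(t,y))$, which forces you to compute $\partial_y G(t,y)$ by hand and then perform the substitution $t = G^{-1}(H(u),y)$ to reach the target form. Your path is a bit longer and more computational, but it avoids invoking the inverse--graph area identity, which some readers might otherwise need to pause and justify. Both approaches also differ cosmetically at the outset: the paper expands the square and cites \Cref{lem:G-inverse_square_integral} for the $(G^{-1})^2$ term, whereas you keep the product $(G^{-1}-H^{-1})\,\partial_{T_{pred}}G^{-1}$ and compute first moments directly; the net effect is the same linear piece $\frac{2I_0(T_{pred}-T_{obs})}{I_0+\gamma\mathcal{T}/A}$.
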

\begin{proof}
Because $\frac{\partial}{\partial T_{pred}}(H^{-1}(s)) = 0$, 
\begin{equation*}
    \frac{\partial}{\partial T_{pred}}W_2^2(\widetilde{g}_1,\widetilde{h}_1) = 
    \frac{\partial}{\partial T_{pred}}\int_0^1(G^{-1}(s))^2\hspace{1mm}\text{d}s - 2\int_0^1H^{-1}(s)\frac{\partial}{\partial T_{pred}}(G^{-1}(s))\hspace{1mm}\text{d}s,
\end{equation*}
using the Leibniz integral rule. By \Cref{lem:G-inverse_square_integral} and \Cref{lem:G-inverse_first_derivative_T_pred} we simplify
\begin{equation*}
    \frac{\partial}{\partial T_{pred}}W_2^2(\widetilde{g}_1,\widetilde{h}_1) = \frac{2T_{pred}I_0 + 2I_1}{I_0 + \frac{\gamma\mathcal{T}}{A}} - 2\int_0^1H^{-1}(s) - \frac{\frac{\gamma}{A}H^{-1}(s)}{f(G^{-1}(s)-T_{pred}) + \frac{\gamma}{A}}\hspace{1mm}\text{d}s.
\end{equation*} From the change of variables $s = H(t)$, we see that $\int_0^1H^{-1}(s)\hspace{1mm}\text{d}s = \int_0^\mathcal{T}t\widetilde{h}_1(t)\hspace{1mm}\text{d}t$, which can be computed as 
\begin{align*} \int_0^\mathcal{T}
    \frac{tf(t-T_{obs}) + \frac{\gamma t}{A}}{I_0 + \frac{\gamma\mathcal{T}}{A}}\hspace{1mm}\text{d}t &= \frac{1}{I_0 + \frac{\gamma\mathcal{T}}{A}}\left( \frac{\gamma\mathcal{T}^2}{2A} + \int_{-T_{obs}}^{\mathcal{T}-T_{obs}}
    (t+T_{obs})f(t)\hspace{1mm}\text{d}t\right)&
    \\ &=
    \frac{\frac{\gamma\mathcal{T}^2}{2A}+I_1 + T_{obs}I_0}{I_0 + \frac{\gamma\mathcal{T}}{A}}.&
\end{align*}
This means
\begin{equation*}
 \frac{\partial}{\partial T_{pred}}W_2^2(\widetilde{g}_1,\widetilde{h}_1) = \frac{2}{I_0 + \frac{\gamma\mathcal{T}}{A}} \left(I_0(T_{pred}-T_{obs}) + \frac{\gamma}{A}\int_0^1 \frac{H^{-1}(s)}{\widetilde{g}_1(G^{-1}(s))} \hspace{1mm}\text{d}s - \frac{\gamma\mathcal{T}^2}{2A}\right).
\end{equation*}
Furthermore, observe that $\int_0^1 \frac{H^{-1}(s)}{\widetilde{g}_1(G^{-1}(s))} \hspace{1mm}\text{d}s = \int_0^\mathcal{T} H^{-1}(G(t)) \hspace{1mm}\text{d}t$ from the substitution $s = G(t)$. 
Because $H^{-1}(G(t))$ is the inverse function of $G^{-1}(H(t))$, we have that $\int_0^\mathcal{T} H^{-1}(G(t)) \hspace{1mm}\text{d}t  = \mathcal{T}^2 - \int_0^\mathcal{T} G^{-1}(H(t)) \hspace{1mm}\text{d}t$. Thus,
\begin{equation}\label{eq:final_W2_derivative_T_pred}
    \frac{\partial}{\partial T_{pred}}W_2^2(\widetilde{g}_1,\widetilde{h}_1) = 
    \frac{2}{I_0 + \frac{\gamma\mathcal{T}}{A}} \left(I_0(T_{pred}-T_{obs}) +  \frac{\gamma\mathcal{T}^2}{2A} - \frac{\gamma}{A}\int_0^\mathcal{T} G^{-1}(H(t)) \hspace{1mm}\text{d}t\right).
\end{equation}
Observe that  $G^{-1}(H(t),T_{obs}) = t$. From \Cref{eq:G-inverse_first_derivative_T_pred}, we have that
\begin{align*}
    G^{-1}(H(t),T_{pred}) &= t + \int_{T_{obs}}^{T_{pred}}\frac{\partial}{\partial y}G^{-1}(H(t),y)\hspace{1mm}\text{d}y& \\&= t + \int_{T_{obs}}^{T_{pred}}\left(1-\frac{\frac{\gamma}{A}}{f(G^{-1}(H(t),y)-y)+\frac{\gamma}{A}}\right)\hspace{1mm}\text{d}y.&
\end{align*} Substituting this expression for $G^{-1}(H(t))$ into \Cref{eq:final_W2_derivative_T_pred} gives
\begin{equation*}
    \frac{2}{I_0 + \frac{\gamma\mathcal{T}}{A}} \left(\left(I_0-\frac{\gamma\mathcal{T}}{A}\right)(T_{pred}-T_{obs}) + \frac{\gamma}{A}\int_0^\mathcal{T}\int_{T_{obs}}^{T_{pred}}\frac{\frac{\gamma}{A}}{f(G^{-1}(H(t),y)-y)+\frac{\gamma}{A}}\hspace{1mm}\text{d}y\hspace{1mm}\text{d}t\right)
\end{equation*} as the value of $\frac{\partial}{\partial T_{pred}}W_2^2(\widetilde{g}_1,\widetilde{h}_1)$.
\end{proof}
Expressing $\frac{\partial}{\partial T_{pred}}W_2^2(\widetilde{g}_1,\widetilde{h}_1)$ in this form allows us to prove the convexity of $W_2^2(\widetilde{g}_1,\widetilde{h}_1)$ with respect to $a,b$ subject to a restriction on the source function $f$. Here, we assume that $f$ reaches both positive and negative values.
\begin{thm}\label{thm:W2_amp_convex}
Let $q = \frac{\frac{\gamma}{A}}{\sup f + \frac{\gamma}{A}}$ and
suppose that $I_0 \ge (1-q)\frac{\gamma\mathcal{T}}{A}$. Let $a^*$ and $b^*$ be positive constants such that $\frac{b^*X}{2a^*} \ge S_0$ and let
\begin{equation*}V = \{(a,b)\in\R^2 : a > \frac{a^*}{100}, \frac{bX}{2a}\ge \frac{b^*X}{2a^*}, T(X,a^*,b^*)\le T(X,a,b)\le \mathcal{T} - p_2\}.
\end{equation*}
Then, $V$ is nonempty and
$W_2^2(\widetilde{g}_1(t,a,b),\widetilde{h}_1(t))$ is jointly convex in $(a,b)\in V$.
\end{thm}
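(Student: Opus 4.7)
\bigskip

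\noindent\textbf{Proof plan.}
My overall strategy is to apply the composition lemma \Cref{lem:compconvex}: I will view $W_2^2(\widetilde{g}_1,\widetilde{h}_1)$ as $P(Q(a,b))$ where $Q(a,b) = T(X,a,b)$ and $P(\tau) = W_2^2(\widetilde{g}_1(\cdot,\tau),\widetilde{h}_1(\cdot))$. The inner map $Q$ is already known to be convex on $\{(a,b):\frac{bX}{2a}\ge S_0\}$ by \Cref{lem_linear_depth}, so the whole argument reduces to proving that $P$ is both nondecreasing and convex on the range of values $T_{pred}\in[T_{obs},\mathcal{T}-p_2]$ that is attained on $V$.

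To check that $P$ is nondecreasing, I will start from the explicit formula for $\frac{\partial}{\partial T_{pred}}W_2^2$ provided by \Cref{lem:final_Tpred_derivative_W2}. The integrand $\frac{\gamma/A}{f(G^{-1}(H(t),y)-y)+\gamma/A}$ is uniformly bounded below: since $f\le \sup f$, one has $f+\gamma/A\le \sup f+\gamma/A$, hence the integrand is at least $q = \frac{\gamma/A}{\sup f+\gamma/A}$. Plugging this pointwise lower bound into the double integral yields
\begin{equation*}
\frac{\partial}{\partial T_{pred}}W_2^2 \;\ge\; \frac{2}{I_0+\frac{\gamma\mathcal{T}}{A}}\Bigl(I_0 - (1-q)\frac{\gamma\mathcal{T}}{A}\Bigr)(T_{pred}-T_{obs}),
\end{equation*}
which is nonnegative under the hypothesis $I_0\ge (1-q)\frac{\gamma\mathcal{T}}{A}$ combined with $T_{pred}\ge T_{obs}$ on $V$.

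To check convexity of $P$, I will differentiate the expression in \Cref{lem:final_Tpred_derivative_W2} once more with respect to $T_{pred}$. The outer $T_{pred}$-integral contributes $\int_0^\mathcal{T} \frac{\gamma/A}{f(G^{-1}(H(t),T_{pred})-T_{pred})+\gamma/A}\,\mathrm{d}t$, while the linear term $(I_0-\gamma\mathcal{T}/A)(T_{pred}-T_{obs})$ contributes $(I_0-\gamma\mathcal{T}/A)$. Using the same uniform lower bound $\ge q$ on the integrand gives
\begin{equation*}
\frac{\partial^2}{\partial T_{pred}^2}W_2^2 \;\ge\; \frac{2}{I_0+\frac{\gamma\mathcal{T}}{A}}\Bigl(I_0 - (1-q)\frac{\gamma\mathcal{T}}{A}\Bigr) \;\ge\; 0,
\end{equation*}
again by the assumed inequality on $I_0$. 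So $P$ is convex and nondecreasing on the required interval, and combining with the convexity of $T(X,\cdot,\cdot)$ via \Cref{lem:compconvex} (with $n=2$) yields joint convexity of $W_2^2$ on $V$.

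The main obstacle I anticipate is justifying the differentiation under the integral when taking the second $T_{pred}$-derivative: the integrand contains the implicitly defined function $G^{-1}(H(t),T_{pred})$, and one has to make sure the dominated convergence / Leibniz hypotheses apply uniformly in $(t,T_{pred})$ over the compact window $[T_{obs},\mathcal{T}-p_2]$. This should follow from the fact that $\widetilde{g}_1$ is bounded below by $\frac{\gamma/A}{I_0+\gamma\mathcal{T}/A}>0$ (so $G^{-1}$ depends smoothly on $T_{pred}$) and $f$ is compactly supported; the uniform lower bound $\ge q$ on the relevant integrand is what lets the estimate go through without any further structural assumption on $f$ beyond $\sup f<\infty$.
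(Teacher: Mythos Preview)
Your proposal is correct and follows essentially the same route as the paper: reduce to showing that $W_2^2$ is nondecreasing and convex in $T_{pred}$ on $[T_{obs},\mathcal{T}-p_2]$, bound the integrand $\frac{\gamma/A}{f+\gamma/A}$ below by $q$ in both the first-derivative formula of \Cref{lem:final_Tpred_derivative_W2} and its $T_{pred}$-derivative, and then invoke \Cref{lem_linear_depth} with \Cref{lem:compconvex}. The only cosmetic difference is that the paper differentiates the intermediate expression \eqref{eq:final_W2_derivative_T_pred} and then applies \Cref{lem:G-inverse_first_derivative_T_pred}, whereas you differentiate the final form of \Cref{lem:final_Tpred_derivative_W2} directly via the fundamental theorem of calculus on the inner $\int_{T_{obs}}^{T_{pred}}$; both yield the identical second-derivative bound.
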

Before beginning the proof of \Cref{thm:W2_amp_convex}, we remark that the function $f(t) = (t-15)^2 - A$ with domain $[0,\mathcal{T}]$, where $\mathcal{T} = 30, \gamma = 2A$, and $0 < A \le 73$, satisfies the condition $I_0 \ge (1-q)\frac{\gamma\mathcal{T}}{A}$ and reaches both positive and negative values. 
To see this, observe that $I_0 = 2250 - 30A \ge 60$ while $(1-q)\frac{\gamma\mathcal{T}}{A} = \frac{60\sup f}{\sup f + 2} < 60$.
Hence, our assumption that $I_0 \ge (1-q)\frac{\gamma\mathcal{T}}{A}$ is a reasonable assumption to make.
\begin{proof} 

First, the point $(a^*, b^*)$ is clearly in $V$, so $V$ is nonempty. Now, we prove that $W_2^2(\widetilde{g}_1(t,a,b),\widetilde{h}_1(t))$ is jointly convex in $(a,b)\in V$. Because $\frac{b^*X}{2a^*} \ge S_0$, $V$ is a subset of $U$, which is defined in \Cref{thm:2d_time_shift}.
This implies that $T(X,a,b)$ is jointly convex in $a,b$ over the region $V$ by \Cref{lem_linear_depth}.
Thus, it is enough to show that $W_2^2(\widetilde{g}_1,\widetilde{h}_1)$ is convex and nondecreasing in $T_{pred}$ in the interval $[T_{obs},\mathcal{T}-p_2]$.
We first claim that $W_2^2(\widetilde{g}_1,\widetilde{h}_1)$ is nondecreasing in $T_{pred}$ in this interval.
By \Cref{lem:final_Tpred_derivative_W2}, we see that
\begin{align*}
    \frac{\partial}{\partial T_{pred}}W_2^2(\widetilde{g}_1,\widetilde{h}_1)
    &\ge \frac{2}{I_0 + \frac{\gamma\mathcal{T}}{A}} \left(\left(I_0-\frac{\gamma\mathcal{T}}{A}\right)(T_{pred}-T_{obs}) + \frac{\gamma}{A}\int_0^\mathcal{T}\int_{T_{obs}}^{T_{pred}}q\hspace{1mm}\text{d}y\hspace{1mm}\text{d}t\right)&\\
    &=\frac{2}{I_0 + \frac{\gamma\mathcal{T}}{A}}\left(I_0+(q-1)\frac{\gamma\mathcal{T}}{A}\right)(T_{pred}-T_{obs})&
\end{align*} because $T_{pred} \ge T_{obs}$. Furthermore, because $I_0 \ge (1-q)\frac{\gamma\mathcal{T}}{A}$, every factor is nonnegative, implying that $\frac{\partial}{\partial T_{pred}}W_2^2(\widetilde{g}_1,\widetilde{h}_1)$ is nonnegative. 
Next, we claim that $W_2^2(\widetilde{g}_1,\widetilde{h}_1)$ is convex in $T_{pred}$ in the interval $[T_{obs},\mathcal{T}-p_2]$.
Using \Cref{eq:final_W2_derivative_T_pred} and \Cref{lem:G-inverse_first_derivative_T_pred}, we see that 
\begin{align*}
\frac{\partial^2}{\partial T_{pred}^2}W_2^2(\widetilde{g}_1,\widetilde{h}_1)
    &= \frac{\partial}{\partial T_{pred}}\left(
    \frac{2}{I_0 + \frac{\gamma\mathcal{T}}{A}} \left(I_0(T_{pred}-T_{obs}) +  \frac{\gamma\mathcal{T}^2}{2A} - \frac{\gamma}{A}\int_0^\mathcal{T} G^{-1}(H(t)) \hspace{1mm}\text{d}t\right)\right)&\\
    &= \frac{2}{I_0 + \frac{\gamma\mathcal{T}}{A}} \left(I_0 - \frac{\gamma}{A}\int_0^\mathcal{T}\frac{\partial}{\partial T_{pred}} G^{-1}(H(t)) \hspace{1mm}\text{d}t\right)&\\
    &= \frac{2}{I_0 + \frac{\gamma\mathcal{T}}{A}} \left(I_0 - \frac{\gamma}{A}\int_0^\mathcal{T}1 - \frac{\frac{\gamma}{A}}{f(G^{-1}(H(t)) - T_{pred}) + \frac{\gamma}{A}} \hspace{1mm}\text{d}t\right)&\\
    &\ge \frac{2}{I_0 + \frac{\gamma\mathcal{T}}{A}}\left(I_0 + (q-1) \frac{\gamma\mathcal{T}}{A}\right).&
\end{align*} Since $I_0 \ge (1-q)\frac{\gamma\mathcal{T}}{A}$, we see that $\frac{\partial^2}{\partial T_{pred}^2} W_2^2(\widetilde{g}_1,\widetilde{h}_1)$ is nonnegative in the interval $[T_{obs},\mathcal{T}-p_2]$. Thus, $W_2^2(\widetilde{g}_1,\widetilde{h}_1)$ is convex and nondecreasing in $T_{pred}$ in the interval $[T_{obs},\mathcal{T}-p_2]$. Since $T_{pred}$ is jointly convex in $(a,b)\in V$, $W_2^2(\widetilde{g}_1,\widetilde{h}_1)$ is also jointly convex in $(a,b)\in V$ by \Cref{lem:compconvex}.
\end{proof}
\begin{rem}
Through the Mean Value Theorem, \Cref{thm:W2_amp_convex} can be reformulated as a result with a condition involving an upper bound on $\frac{\text{d}f}{\text{d}t}$. This reformulation illustrates how the convexity of $W_2^2(\widetilde{g}_1,\widetilde{h}_1)$ is influenced by the frequency of the source function.
\end{rem}
\begin{rem}
The requirements needed to apply \Cref{lem:compconvex} are not satisfied for $T_{pred}\in(0,T_{obs}]$. To see this, observe that the squared $W_2$ distance is nonnegative everywhere and $0$ when $T_{pred} = T_{obs}$. Thus, it is impossible for the squared $W_2$ distance to be nondecreasing in $T_{pred}\in(0,T_{obs}]$, although it may be convex in this interval.
\end{rem}
\section{Numerical Results}
\begin{figure}
\begin{subfigure}{.33\textwidth}
  \centering
  \includegraphics[scale = 0.35]{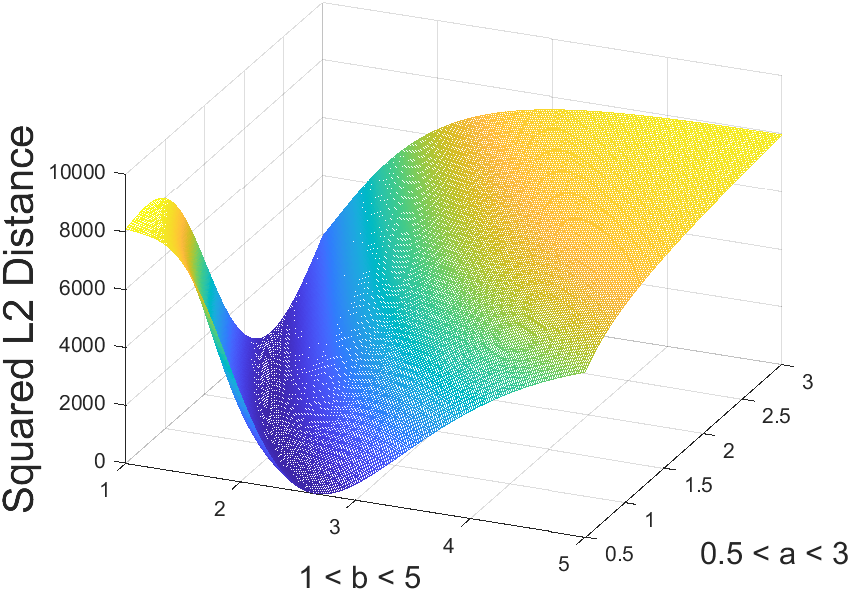}
  \caption{$\alpha = 2$}
  \label{fig:L2 Amplitude Plot 2}
\end{subfigure}%
\begin{subfigure}{.33\textwidth}
  \centering
  \includegraphics[scale = 0.35]{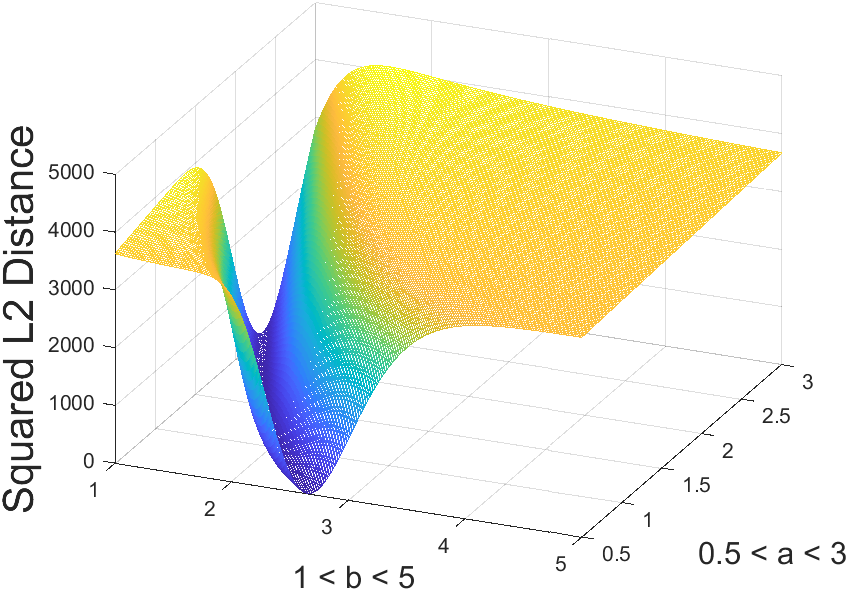}
  \caption{$\alpha = 10$}
  \label{fig:L2 Amplitude Plot 10}
\end{subfigure}%
\begin{subfigure}{.33\textwidth}
  \centering
  \includegraphics[scale = 0.35]{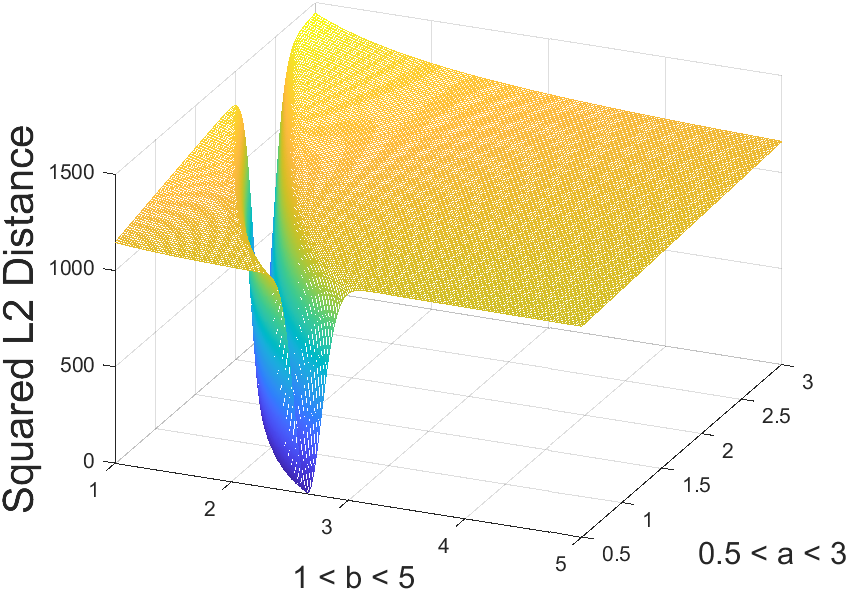}
  \caption{$\alpha = 100$}
  \label{fig:L2 Amplitude Plot 100}
\end{subfigure}%
    \caption{Plots of
    $||g-h||_2^2$, where there is only one receiver at $X=10$.}
    \label{fig:L2 Distance Plot}
\end{figure}
\begin{figure}

\begin{subfigure}{.33\textwidth}
  \centering
  \includegraphics[scale = 0.35]{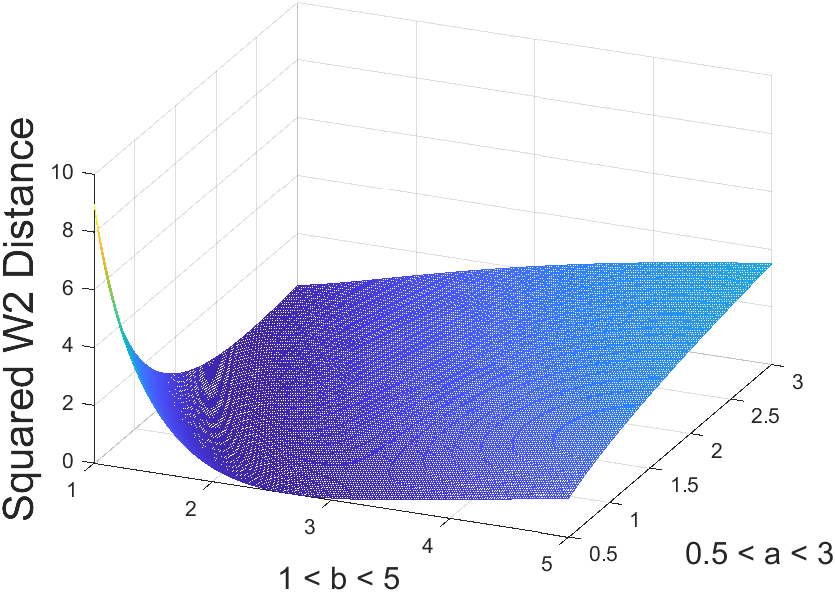}
  \caption{$\alpha = 2$}
  \label{fig:W2 Amplitude Plot 2}
\end{subfigure}%
\begin{subfigure}{.33\textwidth}
  \centering
  \includegraphics[scale = 0.35]{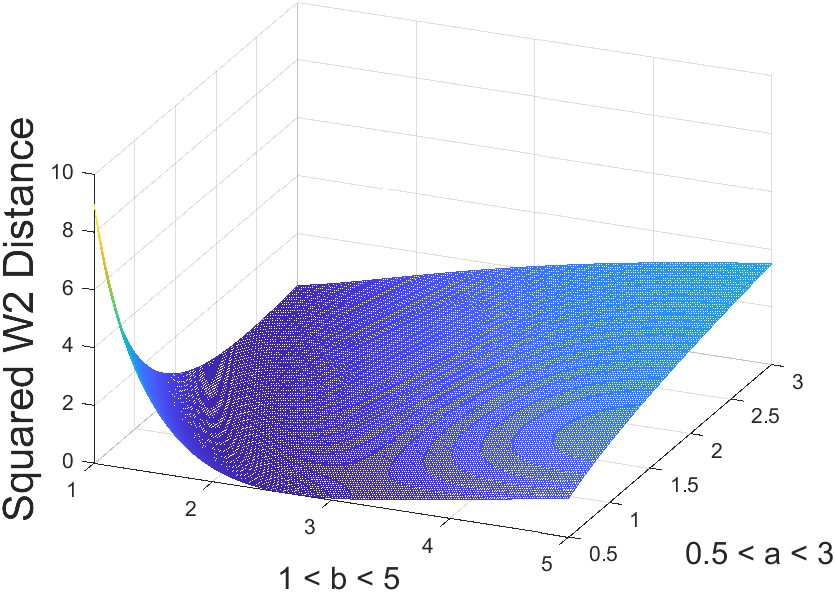}
  \caption{$\alpha = 10$}
  \label{fig:W2 Amplitude Plot 10}
\end{subfigure}%
\begin{subfigure}{.33\textwidth}
  \centering
  \includegraphics[scale = 0.35]{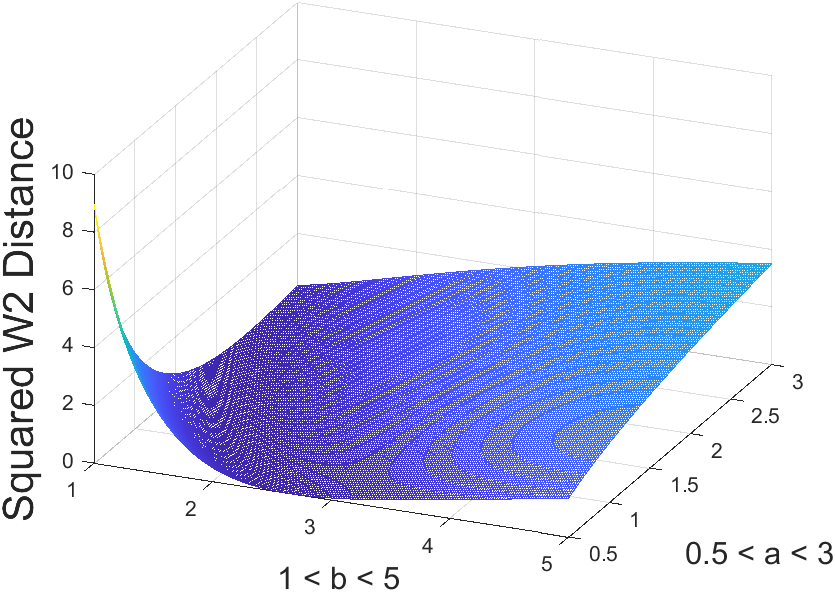}
  \caption{$\alpha = 100$}
  \label{fig:W2 Amplitude Plot 100}
\end{subfigure}%
    \caption{Plots of
    $W_2^2(\widetilde{g}(X,t),\widetilde{h}(X,t))$, where there is only one receiver at $X=10$.}
    \label{fig:Single W2 Convexity Plot}
\end{figure}

\begin{figure}

\begin{subfigure}{.33\textwidth}
  \centering
  \includegraphics[scale = 0.35]{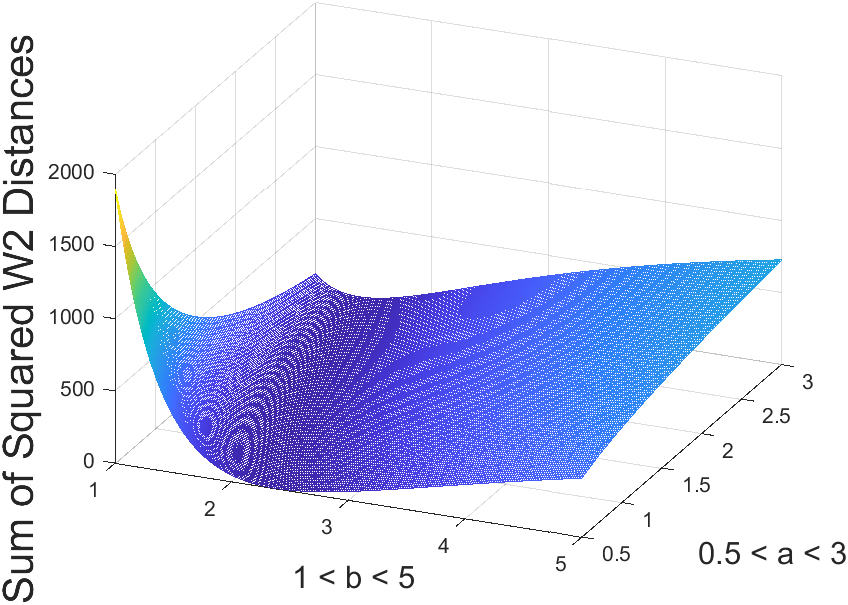}
  \caption{$\alpha = 2$}
  \label{fig:Multiple W2 Amplitude Plot 2}
\end{subfigure}%
\begin{subfigure}{.33\textwidth}
  \centering
  \includegraphics[scale = 0.35]{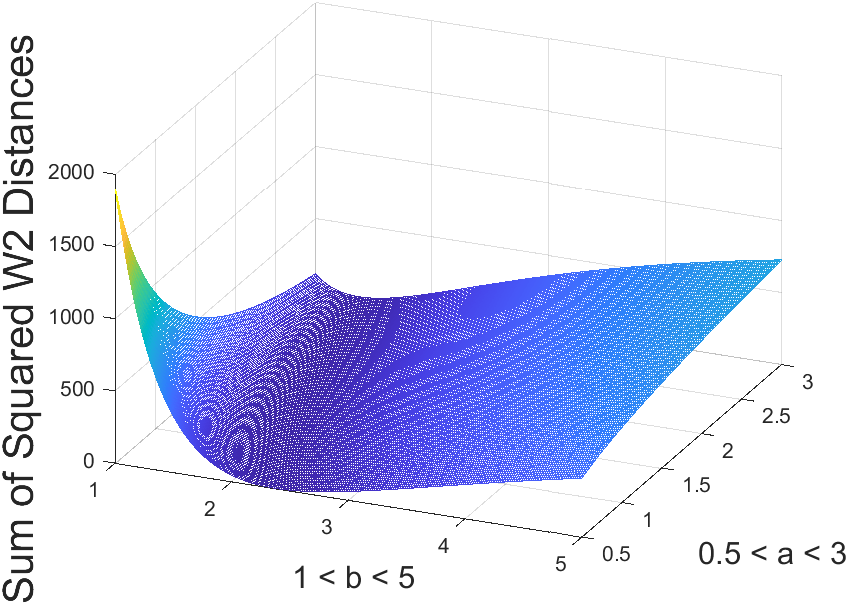}
  \caption{$\alpha = 10$}
  \label{fig:Multiple W2 Amplitude Plot 10}
\end{subfigure}%
\begin{subfigure}{.33\textwidth}
  \centering
  \includegraphics[scale = 0.35]{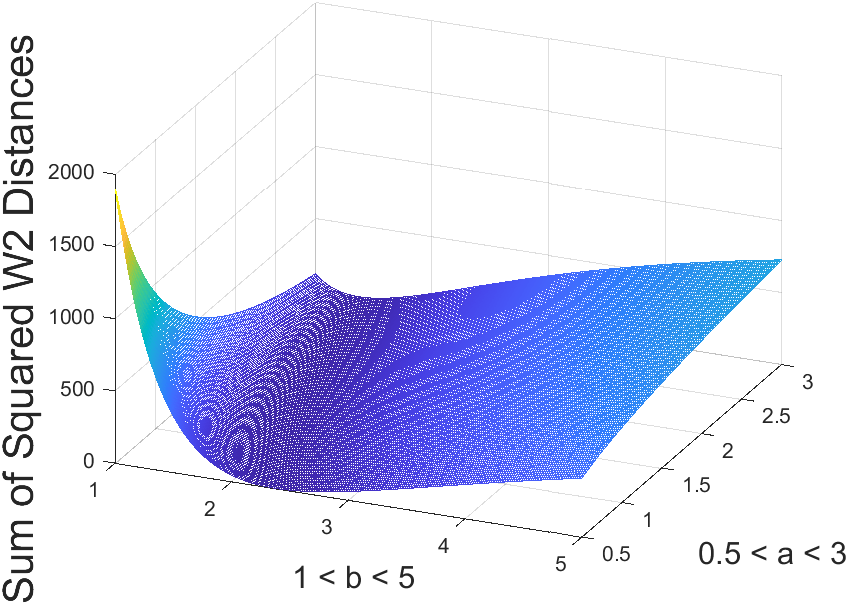}
  \caption{$\alpha = 100$}
  \label{fig:Multiple W2 Amplitude Plot 100}
\end{subfigure}%
    \caption{Plots of
    $\sum_{r} W_2^2(\widetilde{g}(X_r,t),\widetilde{h}(X_r,t))$, where the receiver locations $X_r$ range from $10$ to $100$, inclusive.}
    \label{fig:Multiple W2 Convexity Comparison}
\end{figure}
We continue with the velocity model studied in \Cref{subsection:2d_velocity_model}.
We compute both the $L^2$ and $W_2$ distance and compare the convexity of the two objective functions in $(a,b)$, as shown in \Cref{fig:L2 Distance Plot,fig:Single W2 Convexity Plot}. The source wave function is of the form $f(t) = e^{-\alpha(t-5)^2}$ and we consider $\alpha\in\{2,10,100\}$. Here, $(a^*,b^*) = (1,2)$ and $\gamma = 10^{-6}$. We can consider the observed data to be of the form $A(X,a,b)f(t-T(X,a,b))$ because the velocity is a continuous function of the depth. To ensure that the wave data is compactly supported, we use the time range $[0,50]$ to compute the squared $W_2$ distance.

While the squared $W_2$ distance appears to be mostly convex for $\alpha\in\{2,10,100\}$, the squared $L^2$ norm is certainly nonconvex. The plot of the squared $L^2$ norm has large flat regions with a steep incline closer to where the $L^2$ norm is minimized, as shown in \Cref{fig:L2 Amplitude Plot 2,fig:L2 Amplitude Plot 10,fig:L2 Amplitude Plot 100}. Although the squared $L^2$ norm is nonconvex, by decreasing the value of $\alpha$ we increase the size of the convex region around $(a^*,b^*)$ of the squared $L^2$ norm. As the graph of the source function becomes sharper (\Cref{fig:Source function plots}), so does the graph of the squared $L^2$ norm. The squared $W_2$ distance, on the other hand, is relatively flat throughout the entire domain and does not have a steep incline closer to the minimum, as shown in \Cref{fig:W2 Amplitude Plot 2,fig:W2 Amplitude Plot 10,fig:W2 Amplitude Plot 100}. Thus, the squared $W_2$ distance should be convex on a much larger region containing the minimum. The squared $W_2$ distance is also highly insensitive to the choice of source function, and this suggests that the squared $W_2$ distance can be used to solve various seismic inversion problems, in contrast with the squared $L^2$ norm.

In addition, we plot the sum of the squared $W_2$ distance taken over multiple receiver locations in \Cref{fig:Multiple W2 Convexity Comparison}. The summation of the squared $W_2$ distance over multiple receiver locations is highly convex regardless of $\alpha$, as seen in \Cref{fig:Multiple W2 Amplitude Plot 2,fig:Multiple W2 Amplitude Plot 10,fig:Multiple W2 Amplitude Plot 100}. Furthermore, the summation of the squared $W_2$ distance over multiple receiver locations is also very close to the summation of $(T_{pred} - T_{obs})^2$ over multiple receiver locations, and appears to be convex in $(a,b)$ for $(a,b)$ closer to the origin. Thus, as an initial guess for $(a,b)$, it appears to be better to choose points $(a,b)$ which are very close to the origin. This is equivalent to choosing points $(a,b)$ such that the predicted travel time is large.

\subsection{Frequency Analysis}
\begin{figure}
\begin{subfigure}{.5\textwidth}
  \centering
  \includegraphics[scale = 0.6]{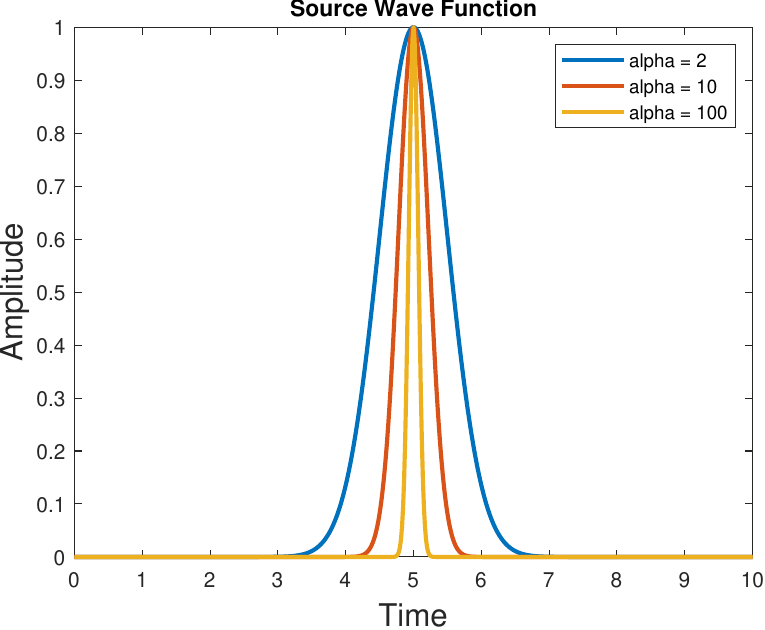}
  \caption{The source functions are of the form $e^{-\alpha(t-5)^2}$.}
  \label{fig:Source function plots}
\end{subfigure}%
\begin{subfigure}{.5\textwidth}
    \centering
    \includegraphics[scale=0.6]{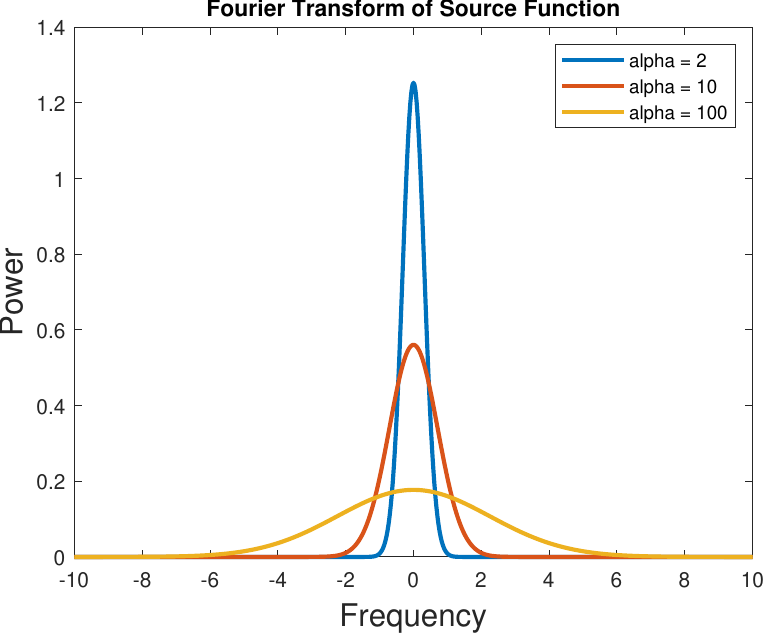}
    \caption{The Fourier transforms of the source functions.}
    \label{fig:Fourier combined plot}
\end{subfigure}%
    \caption{The Fourier transform of the source function depends on the value of $\alpha$. As $\alpha$ increases, the average absolute value of the frequency also increases.}
    \label{fig:Fourier vs. Source plot}
\end{figure}
We observe that the changes in the squared $L^2$ or $W_2$ distances between $g(t,a,b)$ and $h(t)$ depends solely on the value of $\alpha$, which in turn affects the frequency of the source function. The graph of the source function in the frequency domain can be derived by taking a Fourier transform. Letting \begin{equation*}
    \widehat{f}(k) = \int_{-\infty}^\infty e^{-2\pi ikt}f(t)\hspace{1mm}\text{d}t = \int_{-\infty}^\infty e^{-2\pi ikt}e^{-\alpha(t-5)^2}\hspace{1mm}\text{d}t,
\end{equation*} we get that $\widehat{f}(k) = \sqrt{\frac{\pi}{\alpha}}e^{-\frac{(\pi k)^2}{\alpha}-10\pi ik}$, and the power at a frequency $k$ is given by the magnitude, which is $|\widehat{f}(k)| = \sqrt{\frac{\pi}{\alpha}}e^{-\frac{(\pi k)^2}{\alpha}}$. The graph of $|\widehat{f}(k)|$ is centered at $k = 0$ regardless of the value of $\alpha$, but as the value of $\alpha$ increases, the plot of $|\widehat{f}(k)|$ becomes wider, as shown in \Cref{fig:Fourier combined plot}. In other words, if $|k_1| > |k_2|$ where $k_1$ and $k_2$ are fixed frequencies, then $\frac{|\widehat{f}(k_1)|}{|\widehat{f}(k_2)|}$ increases as $\alpha$ increases. Thus, as the value of $\alpha$ increases, the average absolute value of the frequency of the source function also increases. For large values of $\alpha$, the plot of the squared $L^2$ norm also has the properties of higher frequency data, as the plot is very sharp close to $(a^*,b^*) = (1,2)$. The high--frequency data seen in \Cref{fig:L2 Amplitude Plot 100} is explained by the squared $L^2$ norm weighting low--frequency and high--frequency terms equally, by the Plancherel Theorem. On the other hand, the plot of the squared $W_2$ distance is virtually unchanged as $\alpha$ increases, suggesting that the frequency of the the squared $W_2$ distance is highly insensitive to the frequency of the source function.

The relationship between the $W_2$ distance between two functions $g$ and $h$ and a weighted $\dot{\mathcal{H}}^{-1}$ distance between them helps provide an explanation for these observations regarding frequency~\cite{engquist2020quadratic, villani2003topics}. We define the space $\dot{\mathcal{H}}^1(\R^d)$ through the seminorm \begin{equation*}
    ||f||^2_{\dot{\mathcal{H}}^1(\R^d)} = \int_{\R^d}|k|^2|\widehat{f}(k)|^2\hspace{1mm}\text{d}k 
\end{equation*} and the space $\dot{\mathcal{H}}^{-1}(\R^d)$ is defined as the dual of ${\dot{\mathcal{H}}}^1(\R^d)$ through the norm \begin{equation*}
    ||f||_{\dot{\mathcal{H}}^{-1}(\R^d)} = \sup\{|\langle \digamma,f\rangle_{L^2}|:||\digamma||_{\dot{\mathcal{H}}^1} \le 1\}.
\end{equation*} It is known~\cite{villani2003topics} that the $W_2$ distance is asymptotically equivalent to the $\dot{\mathcal{H}}^{-1}$ norm, which weights terms of lower frequency over terms with higher frequency. Specifically, if $\mu$ is a probability measure and $\text{d}\pi$ is an infinitesimal perturbation with zero total mass, then $W_2(\mu, \mu + \text{d}\pi) = ||\text{d}\pi||_{\dot{\mathcal{H}}^{-1}_{(\text{d}\mu)}} + o(\text{d}\pi)$ \cite{engquist2020quadratic}.
While the objective functions in \Cref{fig:Single W2 Convexity Plot,fig:Multiple W2 Convexity Comparison} are not globally convex, the relationship between the squared $W_2$ distance and the squared ${\dot{\mathcal{H}}^{-1}}$ metric offers an explanation for the smoothness of the plots in these figures, which display properties of low--frequency data. 
\subsection{Optimal Transport for Non--probability Measures}
In general, the wave data tends to alternate between positive and negative values, and the total integral of the observed or predicted wave function does not have to be $1$. Thus, we cannot immediately use the squared $W_2$ distance as our objective function, because it is only defined on probability distributions. The current approach to normalizing the wave data requires two steps: first, transform the wave data to a nonnegative function, and second, divide by the total mass~\cite{engquist2013application}. This ensures that the normalized wave data
satisfies the positivity and total mass requirements. Although there are several possible ways to complete the first step, the known methods of doing this have their own drawbacks.

If the source function $f$ is positive, the first step becomes unnecessary. The requirement in \Cref{thm:2d_time_shift} is not very strong, suggesting that when $f$ is positive, the squared $W_2$ distance is suitable as an objective function.
However, this method does not generalize well to source functions that alternate between negative and positive values. 
In this case, we complete the
first step by initially replacing an alternating function $k$ by $k + \gamma$, where $\inf k + \gamma > 0$.
Then, we divide by the total mass of $k + \gamma$ in the interval $[0,\mathcal{T}]$. This method of normalization takes into account the wave amplitude as well. However, the convexity of the squared $W_2$ distance, in this case, is not as general as with the previous method. Further restrictions on the source $f$ are necessary, as shown by the requirements in \Cref{thm:W2_amp_convex}. This suggests that the squared $W_2$ distance is suitable as an objective function when the normalization constant $\gamma$ is sufficiently close to $0$, or equivalently, when $\inf f$ is sufficiently close to $0$.

\section{Conclusions}
In this paper, we study the convexity of full--waveform inversion using the squared $W_2$ distance as an objective function with respect to the velocity model parameter. We show that the squared $W_2$ distance is a suitable objective function for multiple velocity models when the received signal is nonnegative. Next, we show that the squared $W_2$ distance is suitable in some cases, in a two--dimensional velocity model where the received signal alternates between positive and negative values. We review the smoothing property of the squared $W_2$ distance by its relation to the squared $\dot{\mathcal{H}}^{-1}$ distance, and contrast this with the sharpness of the squared $L^2$ norm, which is very sensitive to high--frequency signals. We also discuss the drawbacks of the normalization methods used in this paper. A natural direction for future research is to generalize the $W_2$ distance to compare functions alternating between positive and negative values.
\section{Acknowledgements}
Firstly, the author would like to thank Dr.\ Yunan Yang for her mentorship and guidance during this project. The author thanks Dr.\ Tanya Khovanova and Boya Song for proofreading this paper and for providing feedback. Finally, the author is thankful to the PRIMES--USA program for making this research project possible. This work is supported in part by the National Science Foundation through grant DMS--1913129.

\bibliographystyle{plain} 
\bibliography{reference}
\end{document}